\newcommand{\mylistbegin}{
  \begin{list}{$\bullet$}
   {
     \setlength{\itemsep}{-2pt}
     \setlength{\leftmargin}{1em}
     \setlength{\labelwidth}{1em}
     \setlength{\labelsep}{0.5em} } }
\newcommand{\mylistend}{
   \end{list}  }
\newcommand{\eg}{\textit{e.g.}}
\newcommand{\ie}{\textit{i.e.}}
\newcommand{\etc}{\textit{etc}}
\newcommand{\wrt}{\textit{w.r.t.~}}
\newtheorem{theorem}{Theorem}[section]
\newtheorem{definition}[theorem]{Definition}
\newtheorem{corollary}[theorem]{Corollary}
\title{cube2net: Efficient Query-Specific Network Construction with Data Cube Organization}
\author{
{Carl Yang, Mengxiong Liu, Frank He, Jian Peng, Jiawei Han}\\
\fontsize{10}{10}\selectfont\itshape
University of Illinois, Urbana Champaign, 201 N Goodwin Ave, Urbana, Illinois 61801, USA\\
\fontsize{9}{9}\selectfont\ttfamily\upshape
\{jiyang3, mliu60, shibihe, jianpeng, hanj\}@illinois.edu
}
\begin{document}

\setlength{\floatsep}{4pt plus 4pt minus 1pt}
\setlength{\textfloatsep}{4pt plus 2pt minus 2pt}
\setlength{\intextsep}{4pt plus 2pt minus 2pt}
\setlength{\dbltextfloatsep}{3pt plus 2pt minus 1pt}
\setlength{\dblfloatsep}{3pt plus 2pt minus 1pt} 
\setlength{\abovecaptionskip}{3pt}
\setlength{\belowcaptionskip}{2pt}
\setlength{\abovedisplayskip}{2pt plus 1pt minus 1pt}
\setlength{\belowdisplayskip}{2pt plus 1pt minus 1pt}

\maketitle
\begin{abstract}
Networks are widely used to model objects with interactions and have enabled various downstream applications. However, in the real world, network mining is often done on particular query sets of objects, which does not require the construction and computation of networks including all objects in the datasets. In this work, for the first time, we propose to address the problem of \textit{query-specific network construction}, to break the efficiency bottlenecks of existing network mining algorithms and facilitate various downstream tasks. To deal with real-world massive networks with complex attributes, we propose to leverage the well-developed data cube technology to organize network objects \wrt~their essential attributes. An efficient reinforcement learning algorithm is then developed to automatically explore the data cube structures and construct the optimal query-specific networks. With extensive experiments of two classic network mining tasks on different real-world large datasets, we show that our proposed \textit{cube2net} pipeline is general, and much more effective and efficient in query-specific network construction, compared with other methods without the leverage of data cube or reinforcement learning.
\end{abstract}


\section{Introduction}
\label{sec:intro}
Networks provide a natural and generic way for modeling the interactions of objects, upon which various tasks can be performed, such as node classification \cite{yang2019neural}, community detection \cite{yang2017cone} and link prediction \cite{yang2017bi, yang2017bridging}
However, as real-world networks are becoming larger and more complex every day, various network mining algorithms need to be frequently developed or improved to scale up, but such innovations are often non-trivial, if not impossible. Moreover, the quality of networks taken by these algorithms is often questionable: Do the networks include all necessary information, and is every piece of information in the networks useful? 

While existing network mining algorithms mostly focus on more complex models for better capturing of the given network structures \cite{zhang2017weisfeiler, lyu2017enhancing, ribeiro2017struc2vec}, in this work, for the first time, we draw attention to the fact that network mining tasks are often specified on \textit{particular sets of objects of interest}, which we call \textit{queries}, and advocate for \textit{query-specific network construction}, where the goal is to construct networks that are most \textit{relevant} to the queries.

\begin{figure}[h!]
        \includegraphics[width=1.05\linewidth]{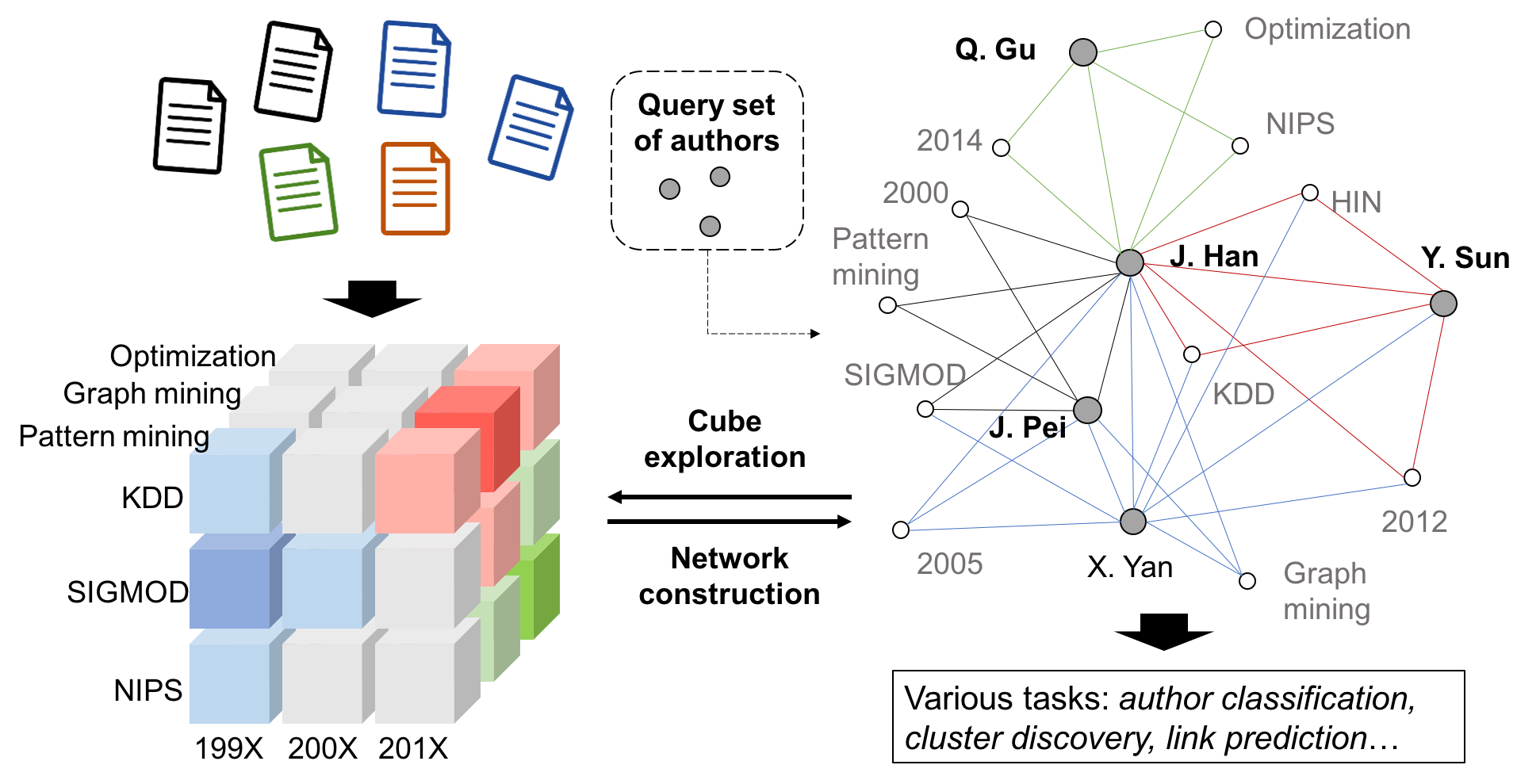}
    \caption{\textit{cube2net}: A running example on DBLP.}
    \label{fig:toy}
 \end{figure}

Under the philosophy of the well-developed technology of \textit{data cube} for large-scale data management, massive real-world networks can be partitioned into small subnetworks residing in fine-grained multi-dimensional \textit{cube cells} \wrt~their essential node properties \cite{han2011data, lin2008text}.
Assuming the proper data cubes can be efficiently constructed for particular networks automatically \cite{yang2019cubenet, zhang2018taxogen, tao2018doc2cube, tao2016multi}, we can then clearly formulate the problem of of this work as follows.

\begin{definition}{Cube-Based Query-Specific Network Construction}. 
Given a massive network with objects organized in a data cube and a query set of objects, find a set of cells, so that objects in the cells are the most relevant to the query.
\label{def:qsnc}
\end{definition}
 

Figure \ref{fig:toy} gives a toy example of query-specific network construction. 
Consider the massive author network of DBLP\footnote{http://dblp.uni-trier.de/}. The task is to find pairs of close collaborators within a particular research group. Only retaining the co-author links within the group and ignoring all outside collaborations clearly leads to significant information loss, while incorporating all co-author links in the whole network is too costly and brings in lots of useless data. 
Based on the fact that the whole network can be partitioned into fine-grained multi-dimensional cube cells like \textsf{$<$200X, KDD, Graph mining$>$}, \textsf{$<$201X, ACL, Text mining$>$}, we can look for a few subnetworks that are the most relevant to the considered group (\eg, by looking at their overlap with the group), and leverage the union of them to serve as the query-specific network.

To identify the set of cells that contain subnetworks most relevant to the query, a straightforward method is exhaustive greedy search, by expanding the network with the most relevant cell incrementally. However, this method suffers from the following two drawbacks:
\begin{enumerate}[leftmargin=15pt]
\item Since nodes in the real-networks can have various properties, there can be thousands or millions of cells, which makes exhaustive search very costly.
\item The problem of looking for a set of most relevant cells is essentially a combinatorial problem over all cells, which can not be optimally solved by the greedy algorithm.
\end{enumerate}


In this work, we propose and design \textit{cube2net}, a simple and effective reinforcement learning algorithm over the data cube structures, to efficiently find a near optimal solution for the combinatorial problem of cube-based query-specific network construction. 
In our reinforcement learning framework, the state is represented by our novelly designed continuous cell embedding vectors which capture the semantic proximities among cells in multiple dimensions, whereas the reward is designed to optimize the overall relevance between the set of selected subnetworks and the query set of objects.
In this way, \textit{cube2net} efficiently improves the utility estimation of various related cells regarding the relevance to the query by exploring each single cell, thus effectively approaching the optimal combination of relevant cells while avoiding the enumeration of all possible combinations.

The main contributions of this work are as follows:
\begin{itemize}[leftmargin=10pt]
\item For real-world large-scale network data mining, we emphasize the lack of properly constructed networks and highlight the urge of query-specific network construction.
\item We develop \textit{cube2net}, based on the well-developed data cube technology and a novel simple reinforcement learning framework, to efficiently find the most relevant set of subnetworks from the cube structure for query-specific network construction. 
\item We conduct extensive experiments using two classic data mining tasks on different real-world massive networks to demonstrate the generality, effectiveness, and efficiency of \textit{cube2net}.
\end{itemize}

\section{Problem Formulation}
\label{sec:cube}

\subsection{Preliminaries}

\subsubsection{Data Cube Basics}
Data cube is widely used to organize multi-dimensional data, such as records in relational databases and documents in text collections \cite{tao2016multi, lin2008text, zhang2018taxogen, tao2018doc2cube}. 
With well-designed cube structures, it can largely boost various downstream data analytics, mining and summarization tasks \cite{han2011data}. In the data cube, each object is assigned into a multi-dimensional \textit{cell} which characterizes its properties from multiple aspects. We call each aspect as a \textit{cube dimension} (to differentiate from vector dimension when necessary), which is formally defined as follows.

\begin{definition}{Cube Dimension}. A dimension in a data cube is defined as $\mathcal{L}^p=\{l_1^p, l_2^p, \ldots, l^p_{|\mathcal{L}^p|}\}$, where $l^p_i\in \mathcal{L}^p$ is a label in this dimension. For each particular dimension, labels are organized in either a flat or a hierarchical way.
\end{definition}

For simplicity, we focus on flat labels in this work. Based on the notion of cube dimension, we formally define a data cube in the following.

\begin{definition}{Data cube}. A data cube is defined as
$\mathcal{C}$ = $\{\mathcal{L}^1, \ldots,$ $\mathcal{L}^P, \mathcal{D}\}$, where $\mathcal{L}^p$ is the $p$-th dimension, and $\mathcal{D}$ is a collection of objects assigned to the cube cells. Each object $d\in \mathcal{D}$ can be represented as $\{l_d^1, \ldots, l_d^P\}$, where $l_d^p$ is the label of $d$ in dimension $\mathcal{L}^p$. $l_d^p$ can also be a set of labels, allowing objects to reside in multiple cells at the same time.
\end{definition}






Without loss of generality, we give a simple example of cube construction with the DBLP data, aiming to demonstrate the power of data cube for efficient data organization and exploration, which further facilitates query-specific network construction. 
In the meantime, we are aware that by no means this is the unique or best way of constructing a DBLP data cube, while we leave the exploration of more cube design choices as future work. A reasonable belief is that better cube designs can lead to more efficient and effective data organization and exploration.

Following the design of \cite{tao2018doc2cube}, we can create three cube dimensions based on paper attributes in DBLP: $\mathcal{L}^{decade}$ derived from the numerical attribute \textsf{publication year}, $\mathcal{L}^{venue}$ derived from the categorical attribute \textsf{publication venue}, and $\mathcal{L}^{topic}$ derived from textual attributes like \textsf{paper titles} and \textsf{abstracts}. 
We assign labels of the \textit{venue} dimension by \textsf{publication years}; \textit{venue} labels are assigned by conference or journal names like \textsf{KDD}, \textsf{TKDE}; \textit{topic} labels are assigned according to a latent phrase-based topic model by labels like \textsf{Neural networks}, \textsf{Feature selection}.
To compute a useful topic model, we first apply the popular AutoPhrase tool \cite{shang2017automated} to extract quality phrases from the whole corpus and represent each paper as a bag of those quality phrases. Then we apply standard LDA and assign each paper with the highest weighted topic label. 
As a consequence, each paper is arranged into multi-dimensional cells like $<$\textsf{201X, KDD, Graph mining}$>$. 

%
Each object (\eg, author) can belong to multiple cells at the same time. The co-author links may also cross different cells. To enable fast network construction, we store all links on both end objects, and add them to the network only when both end objects are selected. As a consequence, a data cube does not partition the network into isolated parts, but rather provides a fine-grained multi-dimensional index over particular subnetworks \wrt~the essential properties of objects in the dataset for efficient exploration and selection.

\subsection{Problem Definitions}
In this work, given a massive real-world network $\mathcal{N}=\{\mathcal{V}, \mathcal{E}, \mathcal{A}\}$, where $\mathcal{V}$ is the set of objects (\eg, millions of authors on DBLP), $\mathcal{E}$ is the set of links (\eg, co-author links on DBLP), and $\mathcal{A}$ is the set of object properties (\eg, favorate venues, active years, research topics for authors on DBLP), we assume a proper data cube $\mathcal{C}$ can be well constructed either automatically or by domain knowledge. Then the input and output of query-specific network construction can be defined as follows.

\textbf{Input:}
A data cube $\mathcal{C}$ that organizes and stores a massive real-world network $\mathcal{N}$ \wrt~its essential node properties $\mathcal{A}$; a query set of objects $\mathcal{Q} \in \mathcal{V}$.

\textbf{Output:}
A set of cube cells $\mathcal{S}$, which includes the most relevant subnetwork $\mathcal{M} \in \mathcal{N}$ regarding $\mathcal{Q}$. 

There are many possible ways to mathematically define the relevance between $\mathcal{M}$ and $\mathcal{Q}$. For simplicity, in this work, we intuitively require $\mathcal{M}$ to be (1) \textit{small} so that the downstream tasks can be solved efficiently, and (2) \textit{complete} so that knowledge mining over $\mathcal{Q}$ can be accurate and effective.
Based on such philosophies, we formulate the relevance function as
\begin{align}
\text{rel}(\mathcal{M}, \mathcal{Q}) = \frac{|\mathcal{M} \cap \mathcal{Q}|}{|\mathcal{M} \cup \mathcal{Q}|},
\label{eq:quality}
\end{align}
where $|\mathcal{M} \cap \mathcal{Q}|$ is large when $\mathcal{M}$ is \textit{complete} by covering more objects in $\mathcal{Q}$, and $|\mathcal{M} \cup \mathcal{Q}|$ is small when $\mathcal{M}$ is \textit{small} and also overlaps much with $\mathcal{Q}$. $r(\mathcal{M}, \mathcal{Q}))$ reaches the optimal value $1$ when the $\mathcal{M}$ exactly covers all objects in $\mathcal{Q}$, which only happens when the union of objects in a certain set of cells are exactly the same as $\mathcal{Q}$. 



%
%

\section{cube2net}
\label{sec:rl}
\subsection{Motivations}
The combinatorial property of query-specific network construction lies in the process of selecting a particular set of objects from the very large set of all objects to construct the query-relevant subnetwork. 
Without the guidance of node properties, such a process can only be done by link-based heuristic search algorithms \cite{gionis2017bump, sozio2010community}. 

In this work, however, by leveraging the powerful technique of data cube, we can organize the massive network \wrt~its essential node properties and search over the cube cells instead of individual objects to construct the query-specific network. However, even with such efficient data organization, we are still facing the challenge of selecting over a large number of cube cells. For example, on DBLP, after removing the empty ones, we still have about 80K cells, which leads to a total number of $2^{80K}$ possible combinations. 




Reinforcement learning has been intensively studied for solving complex planning problems with consecutive decision makings, such as robot control and human-computer games \cite{silver2017mastering, sutton2000policy}. 
Recently, there are several approaches to tackle the combinatorial optimization problems over network data by reinforcement learning \cite{bello2016neural, khalil2017learning}, which are shown even more efficient than other advanced neural network models \cite{vinyals2015order}.
Motivated by their success, we aim to leverage reinforcement learning to efficiently approximate the optimal combinatorial solution for query-specific network construction. 

\subsection{Framework Formulation}
In this subsection, we demonstrate our reinforcement learning framework. 
Given a data cube $\mathcal{C}$ with $n$ cells, a partial solution is represented as $\mathcal{S}=(c_1, \ldots, c_{\mid \mathcal{S} \mid}), c_i \in \mathcal{C}$; $\overline{\mathcal{S}}=\mathcal{C}\setminus \mathcal{S} $ is the set of candidates to be selected conditional on $\mathcal{S}$. 

In \cite{khalil2017learning}, an approximate solution is proposed by using a popular heuristic, namely a greedy algorithm. Their algorithm constructs a solution by sequentially and greedily adding nodes that maximize the evaluation function of a partial solution. The evaluation function is trained on small instances (\eg, 50 nodes) with deep Q-learning and the solution is tested on larger instances (\eg, 1000 nodes) for the performance of generalization. Despite their success on several graph combinatorial optimization tasks, this heuristic does not scale to our problem. 
In our scenario, the action space is of the size $|\mathcal{C}|$, which means in every step of the deep Q-learning, thousands of selections over the cells need to be explored and exploited, making the learning process computationally intractable.

Motivated by \cite{bello2016neural}, our neural network architecture models a stochastic policy $\pi(a \mid S, \mathcal{C})$, where $a$ is the action of selecting the next cell to be added from the candidate set $\overline{\mathcal{S}}=\mathcal{C}\setminus \mathcal{S}$ and $S$ is the state of the current partial solution $\mathcal{S}$. Our target is to learn the parameters of the policy network such that $p(S \mid \mathcal{C})$  assigns a high probability to $S$ that has high quality $q(S)$ given all cells $\mathcal{C}$. We use the chain rule to factorize the probability of a solution as follows:

\begin{equation}
    p(S\mid \mathcal{C})=\prod_{i=1}^{m} \pi(S(i) \mid S(<i), \mathcal{C}),
\end{equation}
where $m$ is the size of the solution. Note that, we refer $\pi(a \mid \mathcal{S})$ to $\pi_\Theta(a \mid S, \mathcal{C})$, where $\Theta$ is the set of parameters inside the policy $\pi$.

\subsubsection{Representation}
As we just discussed, the number of possible states is exponential to the number of cells (\ie, $2^n$). It is impossible to model the states discretely with a look-up table. 
Therefore, we propose the novel technique of cell embedding, which captures the semantic proximities among cells. 
Particularly, for each cell $c\in \mathcal{C}$, a $\kappa$-dimensional embedding vector $\mathbf{u}_c$ is computed to capture $c$'s multi-dimensional semantics regarding the essential properties of the objects it contains (more details are in \textit{Appendix B}). 

Now we study how to naturally leverage such cell embedding for the design of a reinforcement learning algorithm that efficiently explores the cube structures. Specifically, we need to design an appropriate reinforcement learning state representation to encode the currently selected cells in the subset $\mathcal{S}$. 
For model simplicity and easy optimization, we directly compute the state representation as a summation of the embeddings of selected cells in $\mathcal{S}$, \ie, $\mathbf{S}=\sum_{c\in S}\mathbf{u}_c$. 
With this simple form, we theoretically show that we allow the reinforcement learning agent to efficiently explore the cube structures, by simultaneously estimating the utility of semantically close cells at each state. Particularly, we prove the following theorem. 
\begin{theorem}
Given our particular way of state representation, at each state $S$, the effect of the reinforcement learning agent in exploring a particular cell $c\in \mathcal{C}$ is similar to that of exploring any cell $c'\in\mathcal{N}_\xi (c)$, when $\xi$ is sufficiently small, in the sense that the output of the actor (critic) network is similar, \ie, $|\mathbf{f}(S, c)-\mathbf{f}(S,c')|\leq \eta\xi$, and the gradients are similar too, \ie, $|| \nabla \mathbf{f}(S,c) - \nabla \mathbf{f}(S,c') ||^2_2 \leq  \zeta\xi$, where $\mathbf{f}=\mu$ or $\nu$.
\label{th:rl}  
\end{theorem}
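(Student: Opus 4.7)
The plan is to exploit the additive structure of the state representation $\mathbf{S}=\sum_{c\in S}\mathbf{u}_c$ combined with the Lipschitz-continuity properties of the actor/critic networks. The key observation is that exploring cell $c$ versus cell $c'$ from state $S$ only shifts the aggregated state vector by $\mathbf{u}_c-\mathbf{u}_{c'}$; since $c'\in\mathcal{N}_\xi(c)$ precisely means $\|\mathbf{u}_c-\mathbf{u}_{c'}\|_2\leq\xi$, the two network evaluations are on inputs that are $\xi$-close in the cell-embedding space. Once this reduction is in place, both claimed inequalities follow from generic smoothness properties of feed-forward networks applied to these $\xi$-close inputs.

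For the output bound, I would invoke the standard fact that a feed-forward network $\mathbf{f}\in\{\mu,\nu\}$ with bounded weight matrices and $1$-Lipschitz activations (ReLU, tanh, sigmoid) is globally Lipschitz continuous with constant equal to the product of the operator norms of its weight layers; call this constant $\eta$. Then $|\mathbf{f}(S,c)-\mathbf{f}(S,c')|\leq\eta\,\|\mathbf{u}_c-\mathbf{u}_{c'}\|_2\leq\eta\xi$, which is exactly the first claim.

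For the gradient bound, I would additionally assume $\mathbf{f}$ is gradient-Lipschitz with some constant $L$ (automatic for smooth activations; true locally away from ReLU kinks). The same chain of inequalities yields $\|\nabla\mathbf{f}(S,c)-\nabla\mathbf{f}(S,c')\|_2\leq L\xi$, so the squared quantity is at most $L^2\xi^2$. To reconcile this with the theorem's bound $\zeta\xi$ rather than $\zeta\xi^2$, I would invoke the hypothesis that $\xi$ is sufficiently small: fixing any $\xi_0>0$, on $\xi\leq\xi_0$ we have $L^2\xi^2\leq(L^2\xi_0)\,\xi$, so setting $\zeta=L^2\xi_0$ closes the bound.

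The main obstacle I anticipate is justifying the gradient-Lipschitz assumption rigorously when the actor/critic uses ReLU activations, since the gradient is discontinuous across the piecewise-linear kinks. The cleanest remedies are either to restrict to smooth activations, or to shrink $\xi$ further so that the segment between $\mathbf{S}+\mathbf{u}_c$ and $\mathbf{S}+\mathbf{u}_{c'}$ lies inside a single linear region of every ReLU layer, on which the gradient is constant and the bound is trivially satisfied. Everything else is a direct application of composition rules for Lipschitz and smooth maps, so the analytic content is concentrated at this single step.
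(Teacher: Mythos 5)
Your proposal is correct and follows essentially the same route as the paper's proof: both reduce the claim to the observation that the additive state representation makes the network inputs $\mathbf{S}+\mathbf{u}_c$ and $\mathbf{S}+\mathbf{u}_{c'}$ differ by at most $\xi$, and then invoke the Lipschitz continuity of the network and of its gradient (the paper bounds these by the supremum of $\nabla\mathbf{g}$ and of the Hessian over the segment joining the two inputs, rather than by global products of operator norms). The only divergence is bookkeeping: the paper defines $\mathcal{N}_\xi(c)$ by $\|\mathbf{u}_{c'}-\mathbf{u}_c\|_2^2\leq\xi$, which makes both bounds come out linear in $\xi$ directly, so your ``sufficiently small $\xi$'' reconciliation of $\xi^2$ versus $\xi$ is not needed there; and your ReLU concern is moot since the network uses Sigmoid activations, which the paper relies on for smoothness.
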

Moreover, according to Theorem \ref{th:rl}, it is easy to arrive at the following Corollary.
\begin{corollary}
Given our particular way of state representation, the effect of the reinforcement agent of exploring a particular cell $c\in \mathcal{C}$ in a particular state $S$ is similar to that of exploring any cell $c'\in\mathcal{N}_{\xi_1}(c)$ at any state $S'\in\mathcal{N}_{\xi_2}(S)$, when both $\xi_1$ and $\xi_2$ are sufficiently small.  
\label{co:rl}
\end{corollary}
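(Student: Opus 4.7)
The plan is to reduce Corollary~\ref{co:rl} to Theorem~\ref{th:rl} plus a symmetric smoothness estimate in the state argument, glued together by the triangle inequality. The crucial observation is that the state representation $\mathbf{S} = \sum_{c \in \mathcal{S}} \mathbf{u}_c$ enters the actor/critic network $\mathbf{f}$ through the same type of Euclidean slot as the cell embedding itself, so the bound that Theorem~\ref{th:rl} gives when perturbing $c$ should lift immediately to an analogous bound when perturbing $S$.

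For the value bound, I would first write
\begin{equation}
|\mathbf{f}(S,c) - \mathbf{f}(S',c')| \;\leq\; |\mathbf{f}(S,c) - \mathbf{f}(S',c)| + |\mathbf{f}(S',c) - \mathbf{f}(S',c')|.
\end{equation}
The second term is directly Theorem~\ref{th:rl} applied at state $S'$ with $c' \in \mathcal{N}_{\xi_1}(c)$, yielding a bound of order $\eta \xi_1$. For the first term, I would invoke the same differentiability assumption on $\mathbf{f}$ that underlies Theorem~\ref{th:rl}, but now in the state slot: since $\|S - S'\| \leq \xi_2$, a first-order mean-value argument produces $|\mathbf{f}(S,c) - \mathbf{f}(S',c)| \leq \eta' \xi_2$ for some constant $\eta'$ controlled by the Lipschitz norm of $\mathbf{f}$ in its first argument. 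Summing yields a total bound of order $\eta \xi_1 + \eta' \xi_2$, which is small whenever both $\xi_1$ and $\xi_2$ are.

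For the gradient part I would decompose analogously via
\begin{equation}
\|\nabla \mathbf{f}(S,c) - \nabla \mathbf{f}(S',c')\|_2^2 \;\leq\; 2\|\nabla \mathbf{f}(S,c) - \nabla \mathbf{f}(S',c)\|_2^2 + 2\|\nabla \mathbf{f}(S',c) - \nabla \mathbf{f}(S',c')\|_2^2,
\end{equation}
bounding the second piece by Theorem~\ref{th:rl} (contributing $2\zeta \xi_1$) and the first by a second-order smoothness of $\mathbf{f}$ in the state argument (contributing $2\zeta' \xi_2$). The combined estimate then gives the desired $\mathcal{O}(\xi_1 + \xi_2)$ gradient discrepancy.

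The main, and really the only non-routine, obstacle is justifying the state-side smoothness estimate, because Theorem~\ref{th:rl} is phrased only for perturbations of $c$. The cleanest way to handle this is to observe that because the network receives $S$ as an additive sum of cell embeddings, any Euclidean $\xi_2$-perturbation of $S$ can be realized (up to the same first-order approximation used in proving Theorem~\ref{th:rl}) by substituting one or a small number of constituent cells of $\mathcal{S}$ with $\xi_2$-close neighbors $-$ exactly the setting Theorem~\ref{th:rl} already handles. Once this reduction is made, both the value and gradient bounds claimed by the corollary follow at once from the triangle decompositions above.
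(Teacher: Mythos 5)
Your proof is correct and follows the route the paper intends: the corollary is stated as an immediate consequence of Theorem~\ref{th:rl}, and your triangle-inequality decomposition into a cell perturbation and a state perturbation, each controlled by the smoothness of the actor/critic network, is exactly the natural way to make that precise. One simplification worth noting: your final reduction (realizing the $\xi_2$-perturbation of the state by swapping constituent cells of $\mathcal{S}$ for nearby ones) is unnecessary, because the network has the form $\mathbf{f}(S,c)=\mathbf{g}(\mathbf{S}+\mathbf{u}_c)$, so a perturbation of $\mathbf{S}$ and a perturbation of $\mathbf{u}_c$ enter through the same additive argument, and the sup-gradient and Hessian bounds from the proof of Theorem~\ref{th:rl} in Appendix~A apply verbatim to the combined displacement $\|\mathbf{S}-\mathbf{S}'\|_2+\|\mathbf{u}_c-\mathbf{u}_{c'}\|_2\leq \xi_1+\xi_2$, giving the $\mathcal{O}(\xi_1+\xi_2)$ bounds directly without any separate state-side argument.
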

For detailed proofs, please refer to \textit{Appendix A}.



Continue with our toy example on DBLP, where the deeply colored cells are directly sampled and estimated by the reinforcement learning agent during the exploration. 
According to Theorem \ref{th:rl} and Corollary \ref{co:rl}, assuming the lightly colored cells are those semantically close to the deeply colored ones, their utilities regarding the quality function in similar states are also implicitly explored and evaluated.
In this way, the reinforcement learning algorithm efficiently avoids the exhaustive search over all cube cells and their possible combinations.

In practice, we find that long trajectories leading to large constructed networks usually do not result in efficient model inference and satisfactory task performance. 
Therefore, we set $\delta(S)=\{|\mathcal{S}|<=m\}$, which terminates the selection of more cells if the number of selected cells reaches $m$. 
In Section \ref{sec:exp}, we will study the impact of $m$ on the performance of our algorithm.

\subsubsection{Reinforcement Learning Formulation} 
The components in our reinforcement learning framework are defined as follows.

\begin{enumerate}[leftmargin=15pt]
\item \textbf{State:} A state $S$ corresponds to a set of cells $\mathcal{S}$ we have selected. Based on previous discussion, a state is represented by a $\kappa$-dimensional vector $\mathbf{S}=\sum_{c \in \mathcal{S}} \mathbf{u}_c$.

\item \textbf{Action:} An action $a$ is a cell $c$ in $\overline{\mathcal{S}}=\mathcal{C}\setminus \mathcal{S}$. We will cast the details of the actions later.

\item \textbf{Reward:} The reward $r$ of taking action $a$ at state $S$ is 
\begin{equation}
    r(S, a)=q(S')-q(S),
    \label{eq:reward}
\end{equation}
where $S':=(S, c)$. Given Eq.~\ref{eq:reward}, maximizing the cumulative reward $\sum_{i=1}^{n-1} r(S_i, c_i)$ is the same as maximizing the quality function $q(S)$.

\item \textbf{Transition:} The transition is deterministic by simply adding the cell $c$ we have selected according to action $a$ to the current state $S$. Thus, the next state $S':=(S,c)$.
\end{enumerate}

According to our discussion in Section \ref{sec:cube}.2, the quality function $q$ is set to $q(S)=\text{rel}(\mathcal{M}, \mathcal{Q})$, where $\mathcal{M}$ is the set of objects in the selected cells $\mathcal{S}$ at state $S$.

Continue with our toy example in Figure \ref{fig:toy}, where $\mathcal{Q}$ is the query set of authors including two data mining researchers,\eg, \textsf{Y. Sun} and \textsf{J. Pei}. Assume the two authors are in the same cube cell \textsf{$<$201X, KDD, Data mining$>$}. Consider other authors in DBLP like \textsf{X. Yan}, who is also in the cell \textsf{$<$201X, KDD, Data mining$>$}, and \textsf{Q. Gu}, who is in another cell \textsf{$<$201X, NIPS, Optimization$>$}. Our quality function $q(S)$ will prefer to include \textsf{X. Yan} into the constructed network, who is indeed more relevant to both \textsf{Y. Sun} and \textsf{J. Pei}, and such inclusion will likely benefit the proximity modeling among the particular authors in $\mathcal{Q}$.

Note that, although the toy example looks simple, when we consider $\mathcal{Q}$ including thousands of objects distributed in multiple cells, as well as their relevant objects distributed in thousands of cells, the selection of the most relevant cells becomes a complex combinatorial problem.
We stress that the main contribution of this work is to provide a general framework for efficient network construction, and it is trivial to plug in different quality functions when certain properties of the constructed networks are required or preferred. 
In Section \ref{sec:exp}, we show the power of our framework for multiple network mining tasks with the simple quality function $q$, and we leave the exploration of more theoretically sound or task-specific ones to future work.

\subsection{Learning Algorithm}
We illustrate how reinforcement learning is used to learn the policy parameters $\Theta$. 
It is impossible to directly apply the commonly used Q-learning to our task because finding the greedy policy requires optimization over $a_t$ at every timestep $t$, which is too slow to be practical with large action spaces \cite{lillicrap2015continuous}. To this end, we adopt continuous policy gradient as our learning algorithm. We  design our actions to be in the continuous cell embedding space. At each time, the policy outputs a continuous action vector and then we retrieve the closest cell by comparing the action vector with the cell embeddings.

\begin{figure}[h!]
    \includegraphics[width=0.98\linewidth]{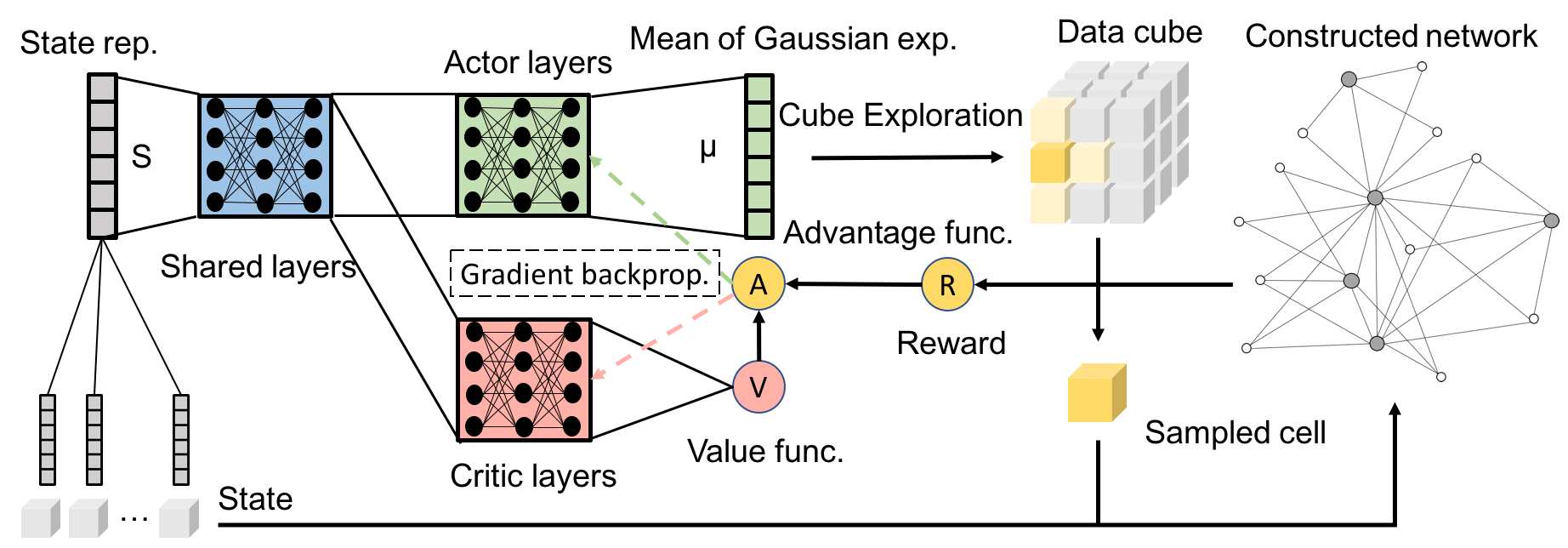}
    \caption{The overall learning paradigm of cube2net.}
    \label{fig:learning}
\end{figure}

Figure \ref{fig:learning} illustrates our learning algorithm. More formally, our policy $\pi(a \mid S)$ takes the state representation $\mathbf{S}$ as the input and produces an action as the output. Instead of using Boltzmann exploration (\eg, softmax) to choose a discrete action $a=c$ from $\overline{\mathcal{S}}$, we project the actions to the $\kappa$-dimensional cell embedding space. Based on the property that semantically similar cells are close in the embedding space, a Gaussian exploration with mean $\mu \in \mathbb{R}^\kappa$ and covariance matrix $\Sigma \in \mathbb{R}^{\kappa \times \kappa}$ is used in our policy \cite{lillicrap2015continuous} as follows.

\begin{align}
    \pi(a \mid S, \{\mu, \Sigma \})=\frac{1}{\sqrt{2\pi \left| \Sigma \right|}} \exp \left(-\frac{1}{2}(a-\mu)^T \Sigma^{-1} (a-\mu)\right),
\end{align}
\begin{align}
    \nabla_\mu \log \pi (a \mid S, \{\mu, \Sigma \})&=(a-\mu)^T\Sigma^{-1},\\
    \nabla_\Sigma \log \pi (a \mid S, \{\mu, \Sigma \})&=\frac{1}{2}\left(\Sigma^{-1} (a-\mu) (a-\mu)^T\Sigma^{-1} - \Sigma^{-1} \right).
\end{align}
A neural network $\mu_\Theta(S)$ is used as our actor network to produce a Gaussian mean vector $\mu$ as part of the action. A learnable parameter $\sigma$ is used to model the Gaussian variance vector as $\Sigma=\sigma^2 I$.

\subsubsection{Policy Gradient}
Since our goal is to find a high-quality solution $S$, our training objective is 
\begin{equation}
U(\Theta \mid C) = \mathbb{E}_{\tau \sim \pi_\Theta(S\mid C)} R(\tau),
\label{obj_pg}
\end{equation}
where $\tau$ denotes a state-action trajectory, \ie, $S_0, a_0, \cdots, S_T, a_T$. Each $\tau$ corresponds to a selected set of cube cells. $T(=m)$ is the length of the trajectory, and $R(\tau)=\sum_{t=0}^T r(S_t, a_t)$ is the reward.
Then we derive the gradient estimator based on policy gradient as
\begin{align}
\nabla_\Theta U=&\frac{1}{\alpha} \sum_{i=1}^{\alpha} \sum_{t=0}^{T-1} \nonumber\\
&\nabla \log \pi_\Theta (a_t^{(i)}\mid S_t^{(i)}) \left(\sum_{k=t}^{T-1}r(S_k^{(i)}, a_{k}^{(i)}) -\nu_\Theta(S_k^{(i)})\right),
\end{align}
where $\alpha$ is the number of trajectories, $\nu_\Theta(S)$ is the value function. 

In a general form, we have the gradient estimator as
\begin{equation}
\hat{g}=\mathbb{\hat{E}}_t[\nabla_\Theta \log \pi_\Theta (a_t \mid S_t) \hat{A}_t],
\end{equation}
where $\hat{A}_t$ is the advantage function at timestep $t$. $\mathbb{\hat{E}}_t$ denotes the empirical average over a mini-batch of samples in the algorithm that alternates between sampling and optimization using the proximal policy gradient algorithm \cite{schulman2017proximal}.


To compute the advantage function $\hat A_t$, we use a T-step advantage function estimator as 
\begin{align*}
\hat{A}_t=-\nu_\Theta(S_t)+r(S_t, a_t)+\gamma r(S_{t+1}, a_{t+1}) + \cdots \\+ \gamma^{T-t+1}r(S_{T-1}, a_{T-1})+\gamma^{T-t}\nu_\Theta(S_T),
\end{align*}
where $\gamma$ is a clipping parameter. $\nu_\Theta(S)$ is learned by minimizing the loss $L^{\text{VF}}(\Theta)=\|\nu_\Theta(S_t) - \nu^{\text{target}}(S_t)\|_2^2$, where $V^{\text{target}}$ is the advantage function minus the value function.

\subsubsection{Neural Architecture}
Our actor network (policy network) $\mu_\Theta(S)$ and critic network (value function) $\nu_\Theta(S)$ shown in Figure \ref{fig:learning} are both fully connected feedforward neural networks, each containing four layers including two hidden layers of size $H$, as well as the input and output layers. 
\textsf{Sigmoid} is used as the activation function for each layer, and the first hidden layer is shared between two networks. Both networks' inputs are $\kappa$-dimensional cell embeddings. The output of the actor network $\mu_\Theta(S)$ is a $\kappa$-dimensional vector $ \mathbb{R}^\kappa$ and the output of critic network $\nu_\Theta(S)$ is a real number.

As for model learning, in each iteration, our algorithm samples a set $\Phi$ of $\alpha$ trajectories (series of cells) of length $T$ using the current policy $\pi_\Theta(a \mid S)$, and constructs the surrogate loss and value function loss based on $\Phi$. Mini-batch SGD is then used to optimize the objectives for $\beta$ epochs, where the model parameters in $\Theta$ are updated by Adam \cite{kingma2014adam}.

\subsection{Computational Complexity}
We theoretically analyze the complexity of our algorithm. Consider the problem of selecting $m$ cells from a data cube $\mathcal{C}$ with $n$ cells ($m \ll n$).
During each step of model training, \textit{cube2net} generates a target mean $\mu_\Theta$ in constant time and then selects a cell from $C$ that is the closest to $\mu_\Theta$. 
Since computing the quality function and updating the neural network model based on particular trajectories take constant time and \textsf{argmax} takes logarithm time when parallelized on GPU, the overall complexity of training and planning with \textit{cube2net} is $O(\alpha \beta cm \log n)$. $\alpha$, $\beta$ are number of trajectories and epochs, which are both set to 40 in our experiments; $c$ is the constant time in computing the quality function. The time for model inference is ignorable compared with model training.

As a comparison, a random algorithm takes $O(1)$ time to select each cell and has an overall complexity of $O(m)$. 
A greedy algorithm at each of the $m$ steps needs to consider all $n$ actions and run the quality function on each of them to get the maximum, which has an overall complexity of $O(cmn)$.
In practice, computing the quality function cannot be parallelized on GPU and always takes the most significant time. Also, the time is not truly constant-- it largely favors \textit{cube2net} as it constructs small networks by looking for the globally optimal combinations of relevant cells.

Note that, the novelty of our reinforcement learning algorithm lies in its unique leverage of cube structures and cell embedding, as well as the efficient Gaussian exploration policy. However, the training of it is rather routine, which as we will also show in Section \ref{sec:exp}, is very efficient and does not require heavy parameter tuning.

\section {Experimental Evaluation} 
\label{sec:exp}
In this section, we evaluate the effectiveness and efficiency of our proposed \textit{cube2net} pipeline on two large real-world datasets from very different domains, \ie, DBLP, an academic publication collection and Yelp\footnote{https://www.yelp.com/}, a business review platform. We compare different network construction methods for two well-studied tasks on each of them-- author clustering on DBLP and business-user link prediction on Yelp, to show the power and generality of \textit{cube2net}. 

\subsection{Author Clustering on DBLP}
We first study the author networks of DBLP, which we have been using as the running example throughout this paper. Particularly, we focus on the problem of author clustering, since it is a well-studied problem with two sets of public standard evaluation labels \cite{sun2011pathsim}, which are not directly captured by the network structures or attributes. Since the essential challenge of network mining is to capture object proximities (\eg, using embeddings), we take an \textit{embedding-clustering} approach and provide a comprehensive evaluation of the embedding quality regarding clustering. 
As shown in the experimental results of various embedding techniques \cite{perozzi2014deepwalk, tang2015line, grover2016node2vec}, the quality of network embeddings are often consistent across different network mining tasks. Therefore, superior performances on author clustering with embedding can be a good indicator to the general high quality of constructed networks towards various network mining tasks.

\subsubsection{Experimental Settings}
{\flushleft \bf Dataset.}
We use the public DBLP dataset V10\footnote{https://aminer.org/citation} collected by \cite{tang2008arnetminer}.
The basic statistics of the dataset are shown in Table \ref{tab:stat}.

\begin{table}[h!]
\centering
 \begin{tabular}{|c|ccc|}
   \hline
{\bf Dataset}&{\bf \#papers}&{\bf \#authors}&{\bf \#links}\\
  \hline
{\bf DBLP} & 3,079,007 & 534,407 & 28,347,138 \\
\hline
\hline
{\bf Dataset}&{\bf \#businesses}&{\bf \#customers}&{\bf \#links}\\
\hline
{\bf Yelp}& 174,567 & 1,326,101 & 5,261,669\\
\hline
 \end{tabular}
 \caption{ \label{tab:stat}\textbf{Statistics of the two public datasets we use.}}
\end{table}

The DBLP dataset contains semi-structured scientific publications, with their corresponding authors, years, venues and textual contents. As discussed in Section \ref{sec:cube}, a simple data cube with dimensions \textit{year}, \textit{venue} and \textit{topic} is constructed, with each cell holding the corresponding \textit{authors}. 

For evaluating the task of author clustering, we use two sets of authors with ground-truth labels regarding research groups and research areas published by \cite{sun2011pathsim}. They are commonly used as ground-truth author labels for evaluating network classification and clustering algorithms on DBLP. The smaller set includes 4 labels of research groups on 116 authors, and the larger set includes 4 labels of research areas on 4,236 authors. We separately use them as the query sets of objects $\mathcal{Q}$ in two sets of experiments.


{\flushleft \bf Baselines.}
We aim to improve network mining by constructing the proper network that contains only the query set of objects $\mathcal{Q}$ and the most relevant additional objects $\mathcal{Q}^+$. Since we are the first to study the problem of query-specific network construction, we design a comprehensive group of baselines that can potentially choose a relevant set of objects $\mathcal{Q}^+$ to add to $\mathcal{Q}$ in the following.
\begin{itemize}[leftmargin=15pt]
\item \textbf{NoCube}: Without a data cube, this method adds no additional object to the smallest network with only $\mathcal{Q}$ as nodes.
\item \textbf{NoCube+}: Without a data cube, this method adds all objects that are directly linked with $\mathcal{Q}$ to the network.
\item \textbf{NoCube++}: Without a data cube, this method adds all objects that are within two steps away from $\mathcal{Q}$ to the network. 
\item \textbf{MaxDisc} \cite{gionis2017bump}: Without a data cube, this method leverages a heuristic search algorithm to find the largest network connected component that contains $\mathcal{Q}$.
\item \textbf{CubeRandom}: With a proper data cube, this method adds objects in $m$ random cells to the network.
\item \textbf{CubeGreedy}: With the same data cube, this method searches through all cells for $m$ times, and greedily add the objects in one cell at each time to optimize the quality function of Eq.~\ref{eq:quality}.
\end{itemize}

{\flushleft \bf Parameter settings.}
When comparing \textit{cube2net} with the baseline methods, we set its parameters to the following default values: 
For cube construction, we set the number of topics in LDA to 100 and we filter out cells with less than 10 objects;
for reinforcement learning, we empirically set the size of hidden layers $H$ to 256, the length of trajectories $T$ (\ie, the number of added cells $m$) to $20$, the sample size $\alpha$ and the number of epochs $\beta$ both to 40, and the clipping parameter of value function $\gamma$ to 0.99.
In addition to these default values, we also evaluate the effects of different parameters such as the length of trajectories $T$ to study their impact on the performance of the three cube-based algorithms. 

{\flushleft \bf Evaluation protocols.}
We aim to show that \textit{cube2net} is a general pipeline that breaks the bottleneck of network mining by constructing query-specific networks. 
Therefore, we evaluate the compared algorithms for network construction through two perspectives: effectiveness and efficiency. 

We study the effectiveness by measuring the performance of downstream applications on the constructed network $G$. 
For author clustering, we firstly compute a network embedding $E$ on $G$, which converts the network structure into distributed node representations.
To provide a comprehensive evaluation, we use three popular algorithms, \ie, \textit{DeepWalk} \cite{perozzi2014deepwalk}, \textit{LINE} \cite{tang2015line} and \textit{node2vec} \cite{grover2016node2vec} for this step. 
We then feed $E$ into the standard $K$-means.

To evaluate the clustering results, we compute the \textit{F1 similarity} (F1), \textit{Jaccard similarity} (JC) and \textit{Normalized Mutual Information} (NMI) \wrt ground-truth labels. Detailed definitions of the three metrics can be found in \cite{liu2015community}.

The efficiency of network construction is reflected by the \textit{time} spent on constructing the network, as well as the \textit{size} of the constructed network (in terms of \#nodes and \#edges), which can further influence the runtime of the network mining algorithms.
All runtimes are measured on a server with one GeForce GTX TITAN X GPU and two Intel Xeon E5-2650V3 10-core 2.3GHz CPUs.

\subsubsection{Performance Comparison with Baselines}
\begin{table*}[t!]
\vspace{-10pt}
\centering
\scriptsize
 \begin{tabular}{|c||ccc|ccc|ccc|}
 \hline
\multirow{3}{*}{\begin{tabular}{@{}c@{}} {\bf Net.~Con.} \\ {\bf Algorithm}\end{tabular}}&\multicolumn{9}{c|}{\bf Effectiveness on the smaller set of authors}\\
   \cline{2-10}
&\multicolumn{3}{c|}{F1}&\multicolumn{3}{c|}{JC} &\multicolumn{3}{c|}{NMI} \\
 \cline{2-10}
  & {\it DeepWalk}& {\it LINE}& {\it node2vec} & {\it DeepWalk}& {\it LINE}& {\it node2vec}& {\it DeepWalk}& {\it LINE}& {\it node2vec} \\
\hline
NoCube & 0.7034 & 0.5620 & 0.6195 & 0.4828 & 0.4621 & 0.5179  & 0.4013 & 0.3642 & 0.3976\\
\hline
NoCube+ & 0.6559 & 0.5263 & 0.5812 & 0.4359 & 0.4178 & 0.4763 & 0.2928 & 0.2814 & 0.3058\\
\hline
NoCube++ & 0.6794 & 0.5247 & 0.5874 & 0.4678 & 0.4155 & 0.4616 & 0.3799 & 0.3152 & 0.3654\\
\hline
MaxDisc & 0.7162 & 0.5423 & 0.6299 & 0.4983 & 0.4437 & 0.5305 & 0.4052 & 0.3246 & 0.4122\\
\hline
CubeRandom & 0.5839 & 0.5087 & 0.5748 & 0.4681 & 0.4194 & 0.5069 & 0.3693 & 0.3105 & 0.3930\\
\hline
CubeGreedy & 0.7445 & 0.5988 &  0.6432 & 0.5492 & 0.4850 & 0.5343 & 0.3981 & 0.3369 & 0.4086\\
\hline
cube2net & \textbf{0.7628} & \textbf{0.6295} & \textbf{0.6913} & \textbf{0.5720} & \textbf{0.5312} & \textbf{0.5834} & \textbf{0.4196} & \textbf{0.3784} & \textbf{0.4517}\\
\hline
\multirow{3}{*}{\begin{tabular}{@{}c@{}} {\bf Net.~Con.} \\ {\bf Algorithm}\end{tabular}}&\multicolumn{9}{c|}{\bf Effectiveness on the larger set of authors}\\
   \cline{2-10}
&\multicolumn{3}{c|}{F1}&\multicolumn{3}{c|}{JC} &\multicolumn{3}{c|}{NMI} \\
 \cline{2-10}
  & {\it DeepWalk}& {\it LINE}& {\it node2vec} & {\it DeepWalk}& {\it LINE}& {\it node2vec}& {\it DeepWalk}& {\it LINE}& {\it node2vec} \\
\hline
NoCube & 0.4336 & 0.3044 & 0.3708 & 0.2541 & 0.1759 & 0.2195  & 0.0809 & 0.0426 & 0.0439\\
\hline
NoCube+ & 0.5207 & 0.3212 & 0.5113 & 0.3022 & 0.1773 & 0.3362 & 0.1105 & 0.0454 & 0.0996\\
\hline
NoCube++ & 0.5515 & 0.3261 & 0.4984 & 0.3989 & 0.1725 & 0.3465 & 0.2174 & 0.0436 & 0.1086\\
\hline
MaxDisc & 0.5859 & 0.3282 & 0.5619 & 0.4249 & 0.1950 & 0.4071 & 0.2120 & 0.0484 & 0.2278\\
\hline
CubeRandom & 0.4018 & 0.2972 & 0.3416 & 0.2158 & 0.1543 & 0.1766 & 0.0620 & 0.4021 & 0.0376\\
\hline
CubeGreedy & 0.6125 & 0.3509 &  0.5768 & 0.4526 & 0.1954 & 0.4186 & 0.2513 & 0.0595 & 0.2224\\
\hline
cube2net & \textbf{0.6447} & \textbf{0.3718} & \textbf{0.6214} & \textbf{0.4865} & \textbf{0.2288} & \textbf{0.4623} & \textbf{0.2597} & \textbf{0.0646} & \textbf{0.2418}\\
\hline
 \end{tabular}
 \caption{ \label{tab:dblp-s-perf}\textbf{Effectiveness of query-specific network construction for clustering query set of authors.}}
\end{table*}

\begin{table*}[t!]
\centering
\scriptsize
 \begin{tabular}{|c||cccc|cc|}
   \hline
\multirow{3}{*}{\begin{tabular}{@{}c@{}} {\bf Net.~Con.} \\ {\bf Algorithm}\end{tabular}}&\multicolumn{6}{c|}{\bf Efficiency on the smaller set of authors} \\
   \cline{2-7}
&\multicolumn{4}{c|}{Computation Time} & \multicolumn{2}{c|}{Network Size} \\
 \cline{2-7}
 &  {\it DeepWalk} & {\it LINE} & {\it node2vec} & Net. Con. & \#nodes & \#edges \\
\hline
NoCube &1s & 1s & 1s & 2s &116 & 318 \\
\hline
NoCube+ &25s & 58s & 65s & 5s & 3,853 & 32,882 \\
\hline
NoCube++ &732s & 652s & 904s& 62s  & 55,724 & 540,780 \\
\hline
MaxDisc & 1s & 2s & 3s & 794s & 140 & 440 \\
\hline
CubeRandom &56s & 59s & 62s & 4s & 2,512 & 24,578 \\
\hline
CubeGreedy &28s & 43s & 32s & 3,082s & 1,464 & 14,892 \\
\hline
cube2net & 7s & 11s & 10s & 296s & 526 & 2,953 \\
\hline
\multirow{3}{*}{\begin{tabular}{@{}c@{}} {\bf Net.~Con.} \\ {\bf Algorithm}\end{tabular}}&\multicolumn{6}{c|}{\bf Efficiency on the larger set of authors} \\
   \cline{2-7}
&\multicolumn{4}{c|}{Computation Time} & \multicolumn{2}{c|}{Network Size} \\
 \cline{2-7}
 &  {\it DeepWalk} & {\it LINE} & {\it node2vec} & Net. Con. & \#nodes & \#edges \\
\hline
NoCube &11s & 74s & 23s & 3s & 4,236 & 4,678 \\
\hline
NoCube+ &270s & 127s & 2,147s & 84s & 74,459 & 120,803 \\
\hline
NoCube++ &2,450s & 1,514s & 8,485s & 1,128s  & 434,941 & 1,372,892 \\
\hline
MaxDisc & 18s & 78s & 113s & 3,390s & 4,718 & 7,004 \\
\hline
CubeRandom &104s & 108s & 430s & 4s & 21,126 & 31,481 \\
\hline
CubeGreedy &62s & 96s & 208s & 4,194s & 10,046 & 16,259 \\
\hline
cube2net & 23s & 83s & 92s & 314s & 6,842 & 12,497 \\
\hline
 \end{tabular}
 \caption{ \label{tab:dblp-s-time}\textbf{Efficiency of query-specific network construction for clustering query set of authors.}}
\end{table*}

We quantitatively evaluate \textit{cube2net} against all baselines on clustering the query set of authors. 
Tables \ref{tab:dblp-s-perf} and \ref{tab:dblp-s-time} comprehensively show the effectiveness and efficiency of compared algorithms.

The scores are deterministic for most algorithms and have very small standard deviations on the others except for \textit{CubeRandom}. We run \textit{CubeRandom} and \textit{cube2net} for 10 times to take the average. The improvements of \textit{cube2net} over other algorithms all passed the paired t-tests with significance value $p<0.01$.

As for effectiveness, the network constructed by \textit{cube2net} is able to best facilitate network mining around the query, particularly, author network clustering in this experiment, under the three metrics computed on standard $K$-means clustering results after network embedding. 
As can be observed, (1) blindly adding neighbors into the network without a cube organization or randomly adding cells can hurt the task performance; (2) with a proper cube structure, greedily adding cells \wrt our quality function can significantly boost the task performance; however, (3) the performance of \textit{CubeGreedy} is still inferior to \textit{cube2net}, which confirms our arguments that the task of network construction is essentially a combinatorial problem, which requires a globally optimal solution that can be efficiently achieved only by reinforcement learning.

Note that, we observe that blindly enriching the network with neighbor nodes does not always hurt the performance. 
To better understand it, we look into the data and find that links among the 116 labeled authors are much denser than the 4,236 authors (\ie, $4.77\%$ \textit{v.s.~} $0.05\%$). It indicates adding neighbors are more helpful when the existing interactions among the query set itself are more sparse. Particularly, we find that among the 4,236 authors, 930 were dangling in the original network, but can be bridged into the constructed network with other queried authors by adding their direct neighbors. 

As for efficiency, (1) without cube organization, the network can easily get too large, which requires significant network construction time, and leads to long runtimes of network mining algorithms; 
(2) the sizes of constructed networks are much more controllable with a proper cube organization, because we can easily set the number of cells to add; (3) even with a proper cube, greedily searching the cube at each step to select the proper cells is extremely time-consuming-- on the contrary, \textit{cube2net} efficiently explores the cube structures with reinforcement learning and is able to find the particular set of cells to construct the most relevant subnetwork, which also makes the downstream network mining more efficient.

Comparing the results on the two sets of query objects of different sizes, we further find that,
(1) as the query set of authors becomes larger, blindly bringing in neighbors leads to much larger networks, and subsequently much longer runtimes of network mining algorithms. Such low efficiency is exactly what we want to avoid by aiming at query-specific network construction in this work;
(2) when the query set becomes much larger, the runtimes of the cube-based algorithms only increase a little, since they still work on the same well organized cube structure by evaluating the utility of cells rather than individual nodes, indicating the power of the data cube organization.

Note that, since \textit{cube2net} needs to learn the policy every time given a new set of objects, the network construction time we report for \textit{cube2net} is a sum of three runtimes: model training, model inference and network link retrieval. In this case, the comparison to other algorithms is fair, whose network construction time is a sum of node selection and link retrieval. 

\begin{figure}[h!]
\centering
\subfigure[DeepWalk - $F1$]{
\includegraphics[width=0.23\textwidth]{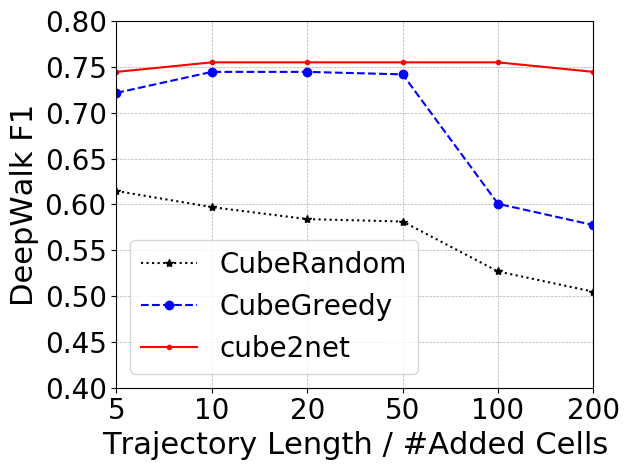}}
\hspace{-8pt}
\subfigure[node2vec - $F1$]{
\includegraphics[width=0.23\textwidth]{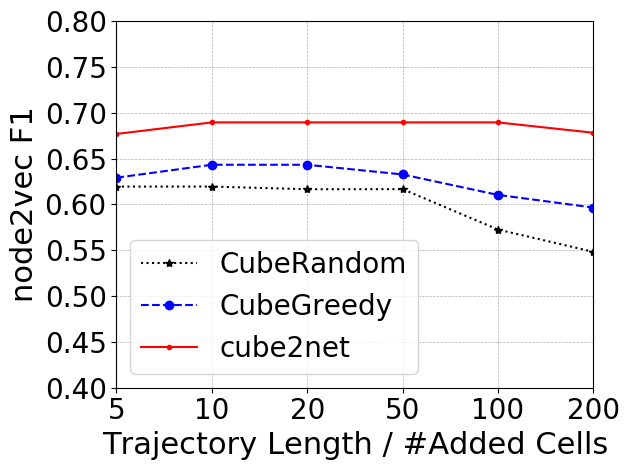}}
\caption{\textbf{Comparing different trajectory lengths.}}
\label{fig:ncell}
\end{figure}

In Figure \ref{fig:ncell}, we also show the performances of the three cube-based algorithms as we consider different numbers of cells to add into the network with the small set of 116 labeled authors. Although particular network mining algorithms might prefer different network structures, the performances of three compared network construction algorithms vary in very similar trends: (1) \textit{cube2net} maintains the best and most robust performance in all cases; (2) \textit{CubeGreedy} performs closely with \textit{cube2net} when the number of added cells is small, but gets worse as more cells are considered and the combinatorial property of the problem emerges; (3) \textit{CubeRandom} keeps getting worse as more random cells are added into the network.

\subsubsection{Case Study}
\begin{table}[h!]
\centering
\scriptsize
 \begin{tabular}{|c|cccc|}
   \hline
{\bf Authors} & \multicolumn{4}{c|}{\bf Cells selected for the given set of authors}\\
  \hline
\multirow{6}{*}{\begin{tabular}{@{}c@{}} {\sf J. Han } \\ {\sf D. Blei } \\ {\sf D. Roth } \\ {\sf J. Leskovec } \\ {\sf ... } \end{tabular}}& {\bf Rank} & {\bf Decade} & {\bf Venue} & {\bf Topic}\\
   \cline{2-5}
  & 1 & {\sf 200X} & {\sf KDD} & {\sf Recommender systems}\\
   \cline{2-5}
   & 2 & {\sf 200X} & {\sf AAAI} & {\sf Knowledge discovery}\\
   \cline{2-5}
   & 3 & {\sf 200X} & {\sf ICIC} & {\sf Multi-task learning}\\
   \cline{2-5}
   & 4 & {\sf 200X} & {\sf ICML} & {\sf Statistic models}\\
   \cline{2-5}
  & 5 & {\sf 200X} & {\sf EMNLP} & {\sf Language models}\\
\hline
 \end{tabular}
 \caption{ \label{tab:case}\textbf{Example cells selected by \textit{cube2net} on DBLP.}}
\end{table}

\begin{figure}[h!]
\centering
\subfigure[Increasing Proximity]{
\includegraphics[width=0.24\textwidth]{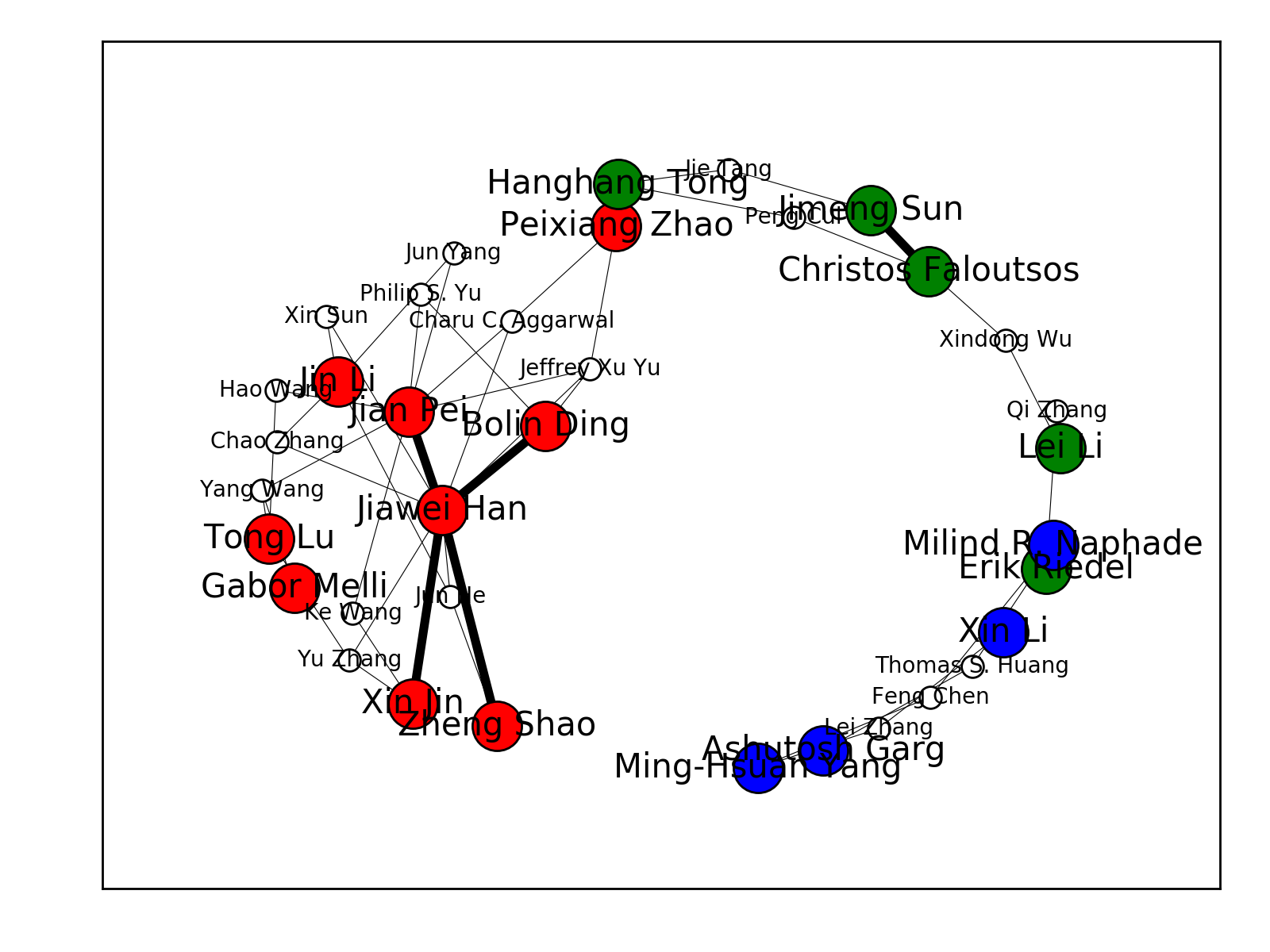}}
\hspace{-12pt}
\subfigure[Decreasing Proximity]{
\includegraphics[width=0.24\textwidth]{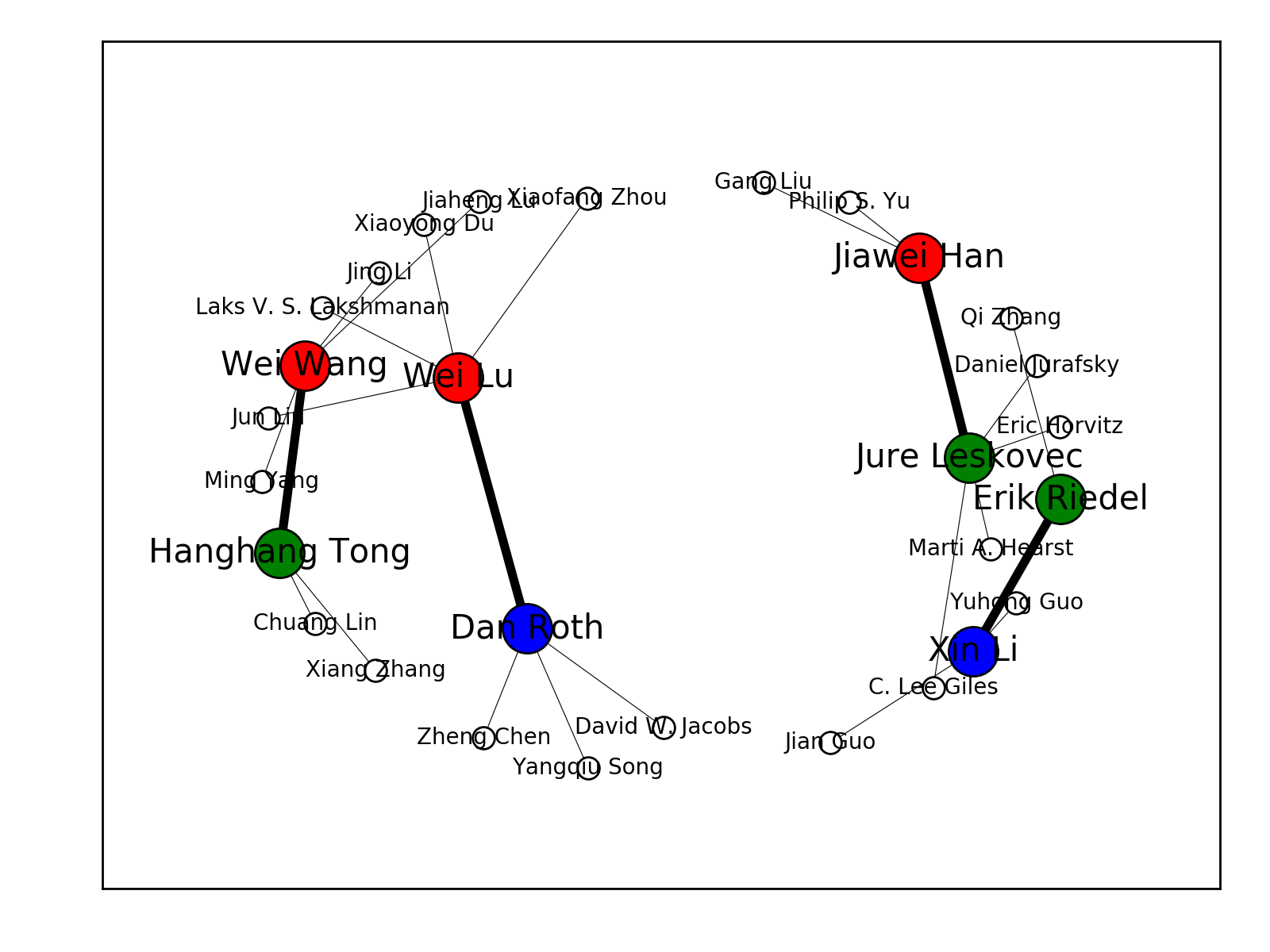}}
\caption{\textbf{Close-ups of the networks constructed by \textit{cube2net} (better viewed with zoom-in).}}
\label{fig:network}
\end{figure}

As a side product of \textit{cube2net}, we are able to gain valuable insight into the query set of objects by looking at the selected cells. 
Table \ref{tab:case} shows the first few cells selected by \textit{cube2net} on DBLP for the small set of 116 authors from four well-known research groups led by authors in the left column. In the table, the rows are intentionally misaligned because there exists no one-to-one correspondence between individual authors and cells.  
However, all cells selected by \textit{cube2net} are indeed highly relevant to the given set of authors, and they together provide a multi-dimensional view of the whole query set. 
By looking at the \textit{decade} dimension, we can see that the majority authors in the set are most active in the 2000s, and \textit{cube2net} does select the most relevant cells in this dimension. However, since the authors have quite different research focuses, the selected labels in the \textit{venue} dimension are quite heterogeneous, which helps drag away authors from different research groups, such as those active in data mining \textit{v.s.}~machine learning. The \textit{topic} dimension further describes the exact research problems that mostly appeal as well as differentiate authors in $\mathcal{Q}$. 

In Figure \ref{fig:network}, we focus on particular parts of the network constructed around the small set of 116 labeled authors to gain insight into how  \textit{cube2net} can construct high-quality networks that facilitate downstream tasks. 
Colored nodes are those in $\mathcal{Q}$, where different colors denote the ground-truth research group labels, and white nodes are those within the cells selected by \textit{cube2net}. The links among nodes in the given set are thicker than those between them and the added nodes.
As we can see, in Figure \ref{fig:network} (a), some authors in the same research groups do not have direct co-authorships, but they are drawn closer by the additional authors who co-author with both of them. Moreover, in Figure \ref{fig:network} (b), some authors in different research groups have co-authorships, but they are drawn further away by their different other co-authors. 
\textit{cube2net} can efficiently find those helpful additional authors, and adding them into the network can effectively improve modeling of the proximities among the query set of authors.

\subsection{Link Prediction On Yelp}
Although we focus on the author networks of DBLP when developing \textit{cube2net}, the concepts and techniques are general and applicable to various real-world networks. To demonstrate this, we conduct experiments on the widely used Yelp dataset. Instead of clustering, we focus on the more popular and well-studied problem of place recommendation. Moreover, since the constructed networks are business-user bipartite networks, instead of the homogeneous network embedding algorithms, we apply standard matrix completion algorithms to evaluate their quality towards link prediction. Superior performance in these experiments can further indicate the general and robust advantages of \textit{cube2net} across different datasets and tasks.

\subsubsection{Experimental Settings}
{\flushleft \bf Dataset.}
We use the public Yelp dataset from the Yelp Challenge Round 11\footnote{https://www.yelp.com/dataset}. Its basic statistics are also in Table \ref{tab:stat}. 

The dataset contains semi-structured data around businesses and users, such as locations, reviews, check-ins, \etc.
To organize the complex multi-dimensional data, we leverage quality attributes of businesses like \textsf{city} and \textsf{category}, and aggregate the \textsf{review texts} of each business to model them similarly as for paper contents in DBLP. Thus, we are able to build a simple data cube with dimensions \textit{city}, \textit{catetory} and \textit{topic}, with each cell holding the corresponding businesses and users. For cell embedding, since all labels are common phrases, we leverage the same pre-trained word embedding table and general cell embedding approach as used for DBLP.

To further demonstrate the power of \textit{cube2net}, we focus on the well studied task of place recommendation on Yelp, which is essentially a link prediction problem on a bipartite network of businesses and users as objects, and reviews as links between them. 
Based on additional attributes of the businesses, we generate two different query sets of objects and hide some of the links among them for performance evaluation. 
Specifically, the first set (ILKID) includes 198 businesses randomly selected from the total 414 that are within the state of \textsf{Illinois} and are \textsf{Good For Kids}, and the second set (NVOUT) includes 2,982 businesses randomly selected from the total 6,020 that are within the state of \textsf{Nevada} and offer \textsf{Takeouts}. Both sets also include 80\% of all users that have left reviews for the corresponding businesses (\ie, 528 users for ILKID and 20,968 users for NVOUT). 20\% of the links generated by the latest reviews are hidden for evaluation (\ie, 161 for ILKID and 6,269 for NVOUT). For each $\mathcal{Q}$ of the two query sets, our task is thus to efficiently bring in relevant businesses and users from the large remaining part of the data, to construct a bipartite network around $\mathcal{Q}$, and improve the performance of link prediction in $\mathcal{Q}$.


{\flushleft \bf Evaluation protocols.}
To evaluate the quality of constructed networks on the particular task of business-user link prediction on Yelp, we chose a popular low-rank matrix completion algorithm called \textit{Soft-Impute} \cite{mazumder2010spectral} with open source implementations\footnote{https://github.com/iskandr/fancyimpute}, due to its relatively high and stable performance and efficiency.
Upon the  bipartite networks constructed by different algorithms in comparison, \textit{Soft-Impute} is applied to recover the missing links, which are then evaluated on the held out links in the query sets.

To evaluate the link prediction performance, we compute precision and recall at $K$, as commonly done in related works \cite{liu2017experimental}. We also record the runtime of network construction and \textit{Soft-Impute} on the same server, as well as the size of the constructed networks.

{\flushleft \bf Parameter settings.}
Since the textual contents in reviews are simpler than in papers, we set the number of \textit{topics} to 50, and we also filter out cells with less than 10 objects. For the reinforcement learning model, we use all the same default parameter settings as in Section \ref{sec:exp}.1, except for the length of trajectories $m$: We use 50 for ILKID and 100 for NVOUT. This also confirms that \textit{cube2net} is not sensitive to parameter settings and can easily maintain robust performance across domains, datasets, and tasks.

\subsubsection{Performance Comparison with Baselines}

\begin{figure*}[h!]
\centering
\vspace{-10pt}
\subfigure[Precision]{
\includegraphics[width=0.245\textwidth]{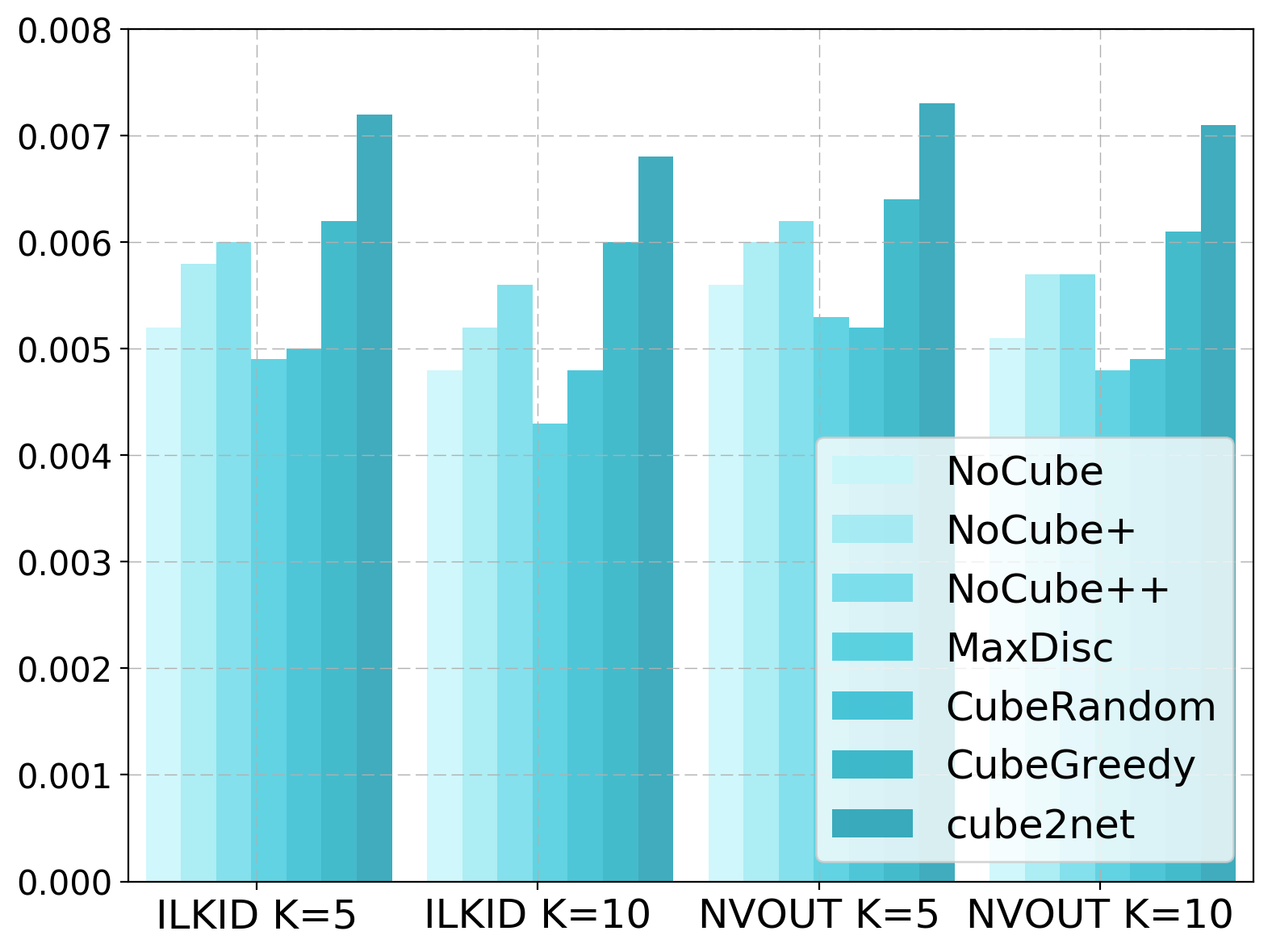}}
\hspace{-8pt}
\subfigure[Recall]{
\includegraphics[width=0.245\textwidth]{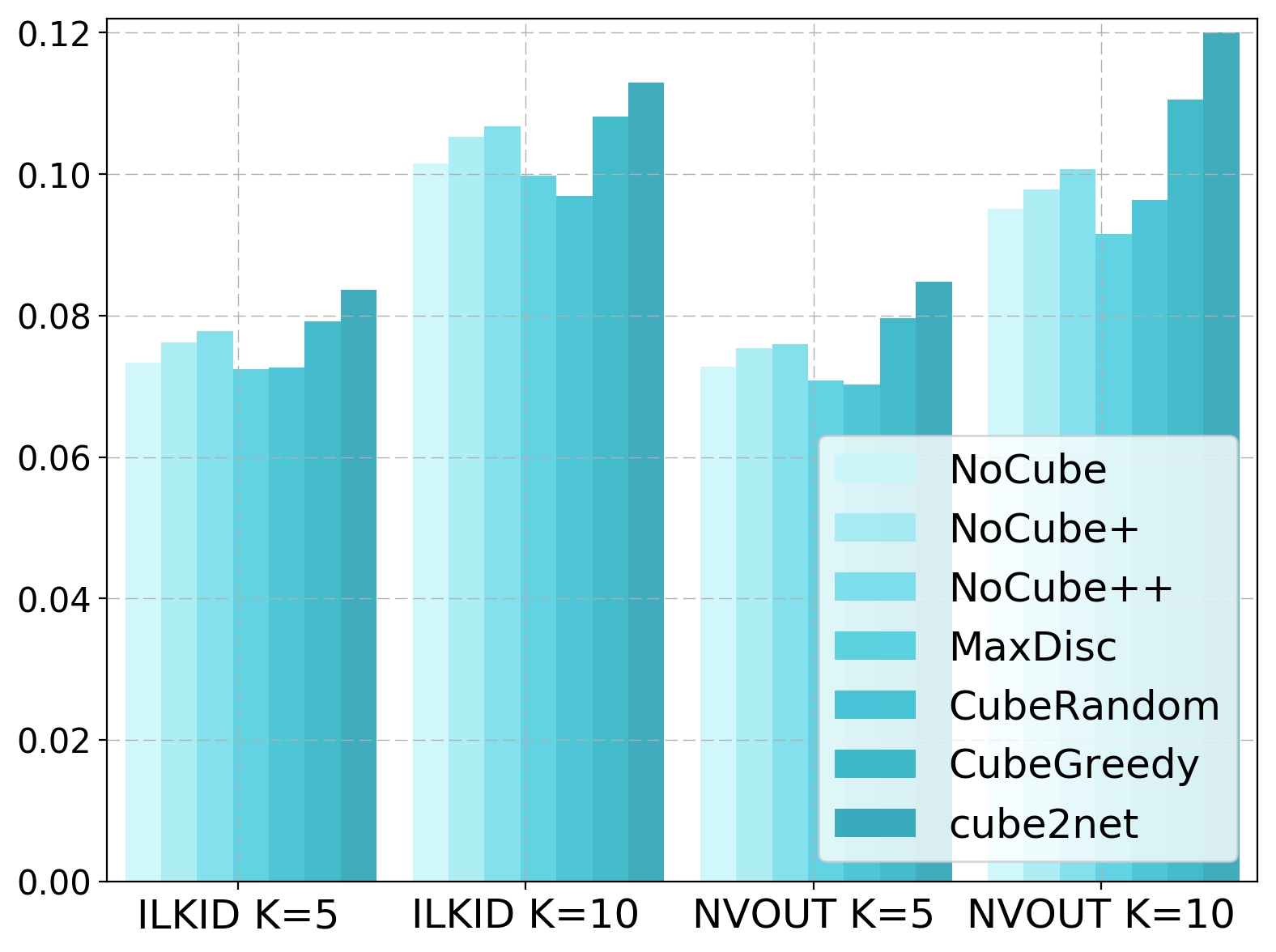}}
\hspace{-8pt}
\subfigure[Runtime / log(s)]{
\includegraphics[width=0.245\textwidth]{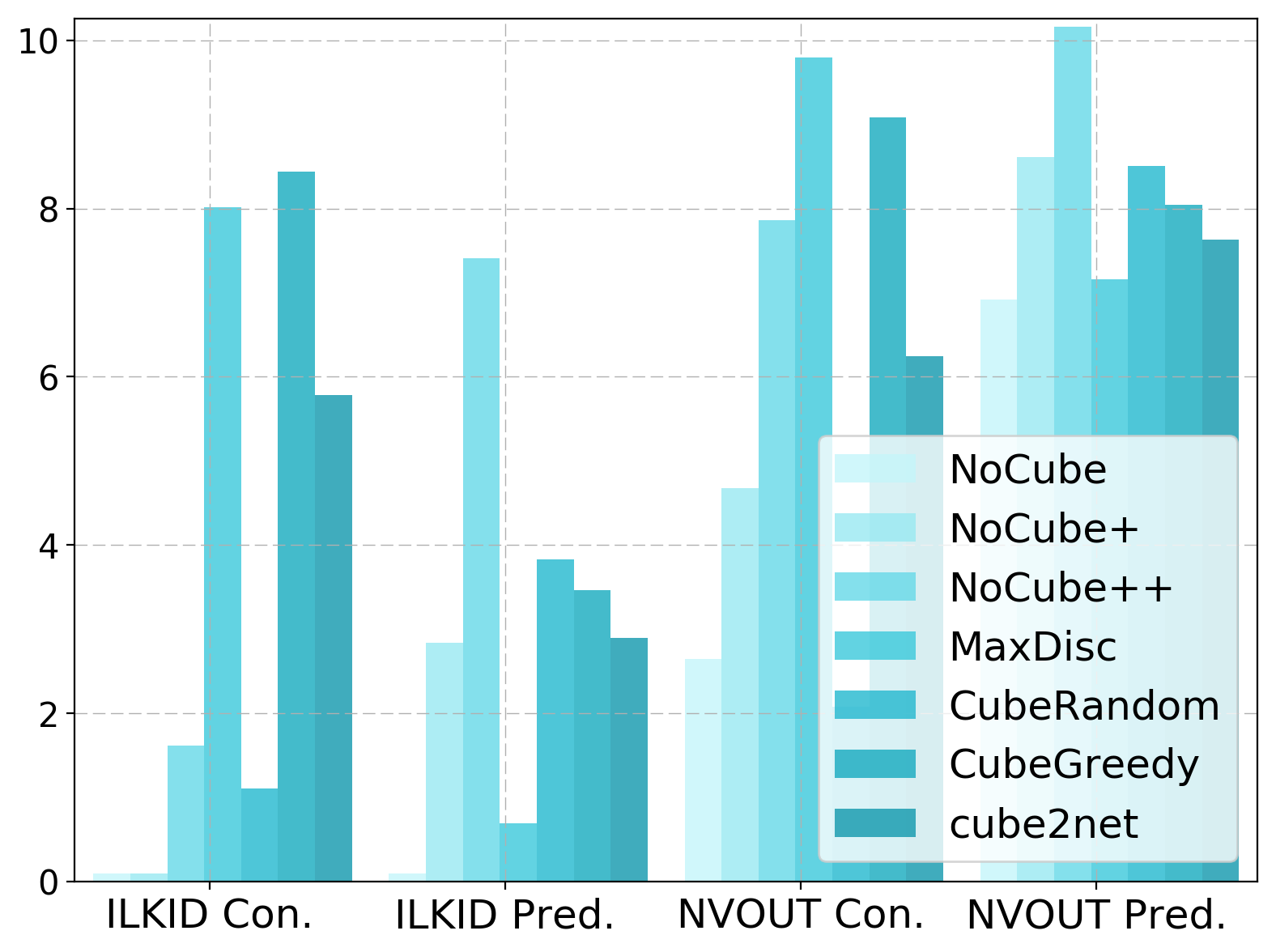}}
\hspace{-8pt}
\subfigure[Network Size / log(\#)]{
\includegraphics[width=0.245\textwidth]{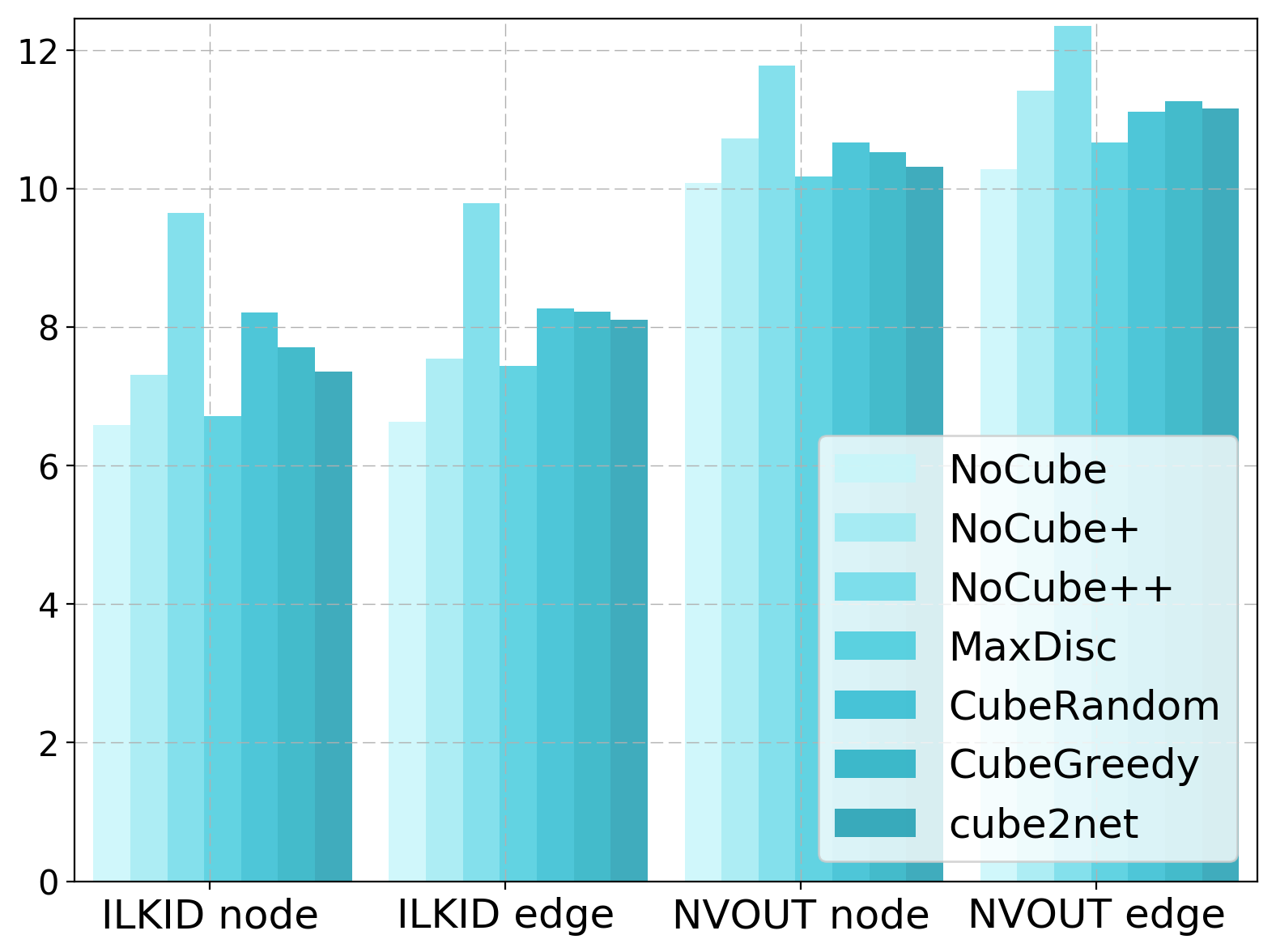}}
\caption{\textbf{Performance of network construction for link prediction on the ILKID Yelp subset.}}
\label{fig:ilkid}
\end{figure*}

Figure \ref{fig:ilkid} demonstrates the quality of constructed query-specific networks on Yelp.

In the figures, similar merits of \textit{cube2net} as discussed on DBLP can be observed, \ie, it is able to construct networks that facilitate downstream tasks with both effectiveness and efficiency.
Particularly, the networks constructed by \textit{cube2net} lead to the best precision and recall for business-user link prediction among all compared algorithms.
Moreover, the networks constructed by \textit{cube2net} have the smallest size, except for \textit{MaxDisc}, which only considers the connected network components.
To compute such a dense and connected network component, \textit{MaxDisc} spends a lot of time searching over the node neighborhoods, but still leaves a lot of nodes dangling outside of the constructed network, resulting in quite poor performance.
While NVOUT (2,982 businesses + 20,968 users) is much larger than ILKID (198 businesses + 161 users), the results follow quite similar trends. Such results strongly indicate the general and robust advantages of \textit{cube2net}.
Finally, for different queries on the corresponding datasets, we use almost the same parameters (except for the trajectory lengths), which confirms the generalization and easy training of \textit{cube2net}.

Note that, compared with DBLP, on Yelp, the node attributes, network properties, downstream tasks and evaluation protocols are all quite different. Therefore, the consistent and significant gains of \textit{cube2net} over all compared algorithms clearly indicate its general power in query-specific network construction to facilitate various network mining algorithms towards different downstream tasks.

\section{Related Work}
\label{sec:related}

\subsection{Network Mining}
Various network mining algorithms have been developed in the past decade, such as relation inference \cite{yang2019relationship, yang2020relation}, structure learning \cite{yang2019query} and structure generation \cite{yang2019conditional}. Recently, unsupervised network embedding based on the advances in neural language models like \cite{mikolov2013distributed} has been extremely popular \cite{perozzi2014deepwalk, tang2015line, grover2016node2vec}. The idea is to compute a distributed representation of nodes, which captures their network proximities, regarding both neighborhood similarity and structural similarity \cite{zhang2017weisfeiler, lyu2017enhancing, ribeiro2017struc2vec, yang2018node, shi2019user}. 

However, almost all existing network mining methods are limited to a fixed set of objects. Among them, most methods assume the network structures to be given \cite{yang2017cone, yang2019place, yang2018similarity, yang2018did, yang2018meta, yang2019relationship}, while some others try to infer links among fixed sets of objects \cite{gomez2010inferring, mcauley2015inferring, hallac2017network}. Some recent methods consider dynamic networks with changing sets of objects \cite{zhou2018dynamic, li2017attributed, ma2018depthlgp}, but they do not actively choose which objects to include into the network. For the particular problem of subnetwork construction, some heuristic search algorithms have been developed to find interesting subnetworks from seed nodes, but they only work for particular network properties and do not consider node attributes \cite{gionis2017bump, sozio2010community}.
Regarding scalability, commercial graph databases like neo4j \cite{vukotic2014neo4j} can handle massive networks efficiently, but they only support fixed sets of operations and do not provide seamless combination to arbitrary graph mining algorithms.

In this work, we consider a more realistic situation where each network mining task is performed on a relatively small query set of objects \cite{yang2019cubenet}, and focus on efficiently leveraging the appropriate portion of the whole massive network to facilitate query-specific knowledge discovery. 
To the best of our knowledge, this is first research effort to tackle network mining through this novel perspective.

%

\subsection{Reinforcement Learning}
Reinforcement learning studies the problem of automated decision making by learning policies to take actions and maximize a reward signal from the environment. Some recent significant progress has been made by combining advances in deep learning for sensory processing with reinforcement learning \cite{krizhevsky2012imagenet}, resulting in the super-human performance on high-dimensional game controls \cite{silver2017mastering}.

As close to our work, several approaches have been proposed to solve the combinatorial optimization problems with reinforcement learning. 
\cite{vinyals2015order} blends an attention mechanism into a sequence-to-sequence structure, in order to learn a conditional probability as a solution to graph mining tasks such as the TSP (Traveling Salesman Problem). They rely on significant training data and can hardly generalize to different tasks or larger networks.
Policy gradient \cite{sutton2000policy} is used to learn the conditional probability policy in \cite{bello2016neural}, while \cite{khalil2017learning} learns a greedy heuristic using deep Q-learning \cite{dai2016discriminative}. In general, they aim to learn policies that generalize to unseen instances from a certain data distribution. These methods have achieved notable successes on some classical graph combinatorial optimization problems. However, when it comes to our scenario of network construction with large action spaces, these algorithms are no longer applicable. 

In this work, we connect continuous control policy gradient with powerful data cube organization through novel cell embedding. To the best of our knowledge, this is the first research effort to leverage reinforcement learning for efficient data cube exploration and quality network construction.

\section{Conclusions}
\label{sec:con}
Network mining has been intensively studied by a wide research community.
In this work, we stress on the novel angle of query-specific network construction, which aims to break the efficiency bottleneck of existing network mining algorithms and facilitate various downstream applications on particular query sets of objects. 
To achieve this goal, we propose to leverage the power of data cube, which significantly benefits the organization of massive real-world networks. 
Upon that, a novel reinforcement learning framework is designed to automatically explore the data cube structures and construct the most relevant query-specific networks.
We demonstrate the effectiveness and efficiency of \textit{cube2net} as a universal network data preprocessor in improving various network mining algorithms for different tasks, with a simple design of data cube and reinforcement learning, while many potential improvements can be explored in future works.

\section*{Acknowledgements}\label{sec:ack}
Research was sponsored in part by U.S. Army Research Lab. under Cooperative Agreement No. W911NF-09-2-0053 (NSCTA), DARPA under Agreement No. W911NF-17-C-0099, National Science Foundation IIS 16-18481, IIS 17-04532, and IIS-17-41317, DTRA HDTRA11810026, and grant 1U54GM114838 awarded by NIGMS through funds provided by the trans-NIH Big Data to Knowledge (BD2K) initiative (www.bd2k.nih.gov).

\subsection*{APPENDIX A: Proof of Theorem \ref{th:rl}}
The $\xi$ neighborhood $\mathcal{N}_\xi(c)$ of cell $c$ is a $\kappa$-dimensional ball around $\mathbf{u}_c$. We firstly write out our actor function $\mathbf{f}(S, c)=\mu(S, c)$ for the current particular state $S$ and cell $c$ being explored. The situation for the critic function $\nu(S,c)$ is exactly the same. According to our neural architecture described in Section \ref{sec:rl}, we have
\begin{align}
\mathbf{f}(S, c) = & \mathbf{g}(\mathbf{S}+\mathbf{u}_c) \nonumber\\
= & \mathbf{h}^Q(\mathbf{h}^{Q-1}(\ldots\mathbf{h}^1(\mathbf{\mathbf{S}+\mathbf{u}_c})\ldots)),
\label{eq:ut}
\end{align}
where
\begin{align}
\mathbf{h}^q(\mathbf{u}) = \text{Sigmoid} (\mathbf{W}^q\mathbf{h}^{q-1}(\mathbf{u})+\mathbf{b}^q).
\label{eq:utf}
\end{align}
$Q$ is the number of layers in the value function, $\mathbf{W}^q$ and $\mathbf{b}^q$ are the parameters of the $q$-th layer. The output of the last layer $\mathbf{h}^Q$ is a $\kappa$-dimensional vector for the actor network $\mu$ (a single number for the critic network $\nu$), and $\mathbf{h}^0(\mathbf{u})=\mathbf{u}$.

Now we prove the first property, \ie, $\forall c'\in\mathcal{N}_\xi(c), ||\mathbf{f}(S, c)-\mathbf{f}(S,c')||^2_2\leq \eta\xi$.
\begin{proof}
Since $c'\in\mathcal{N}_\xi(c)$, we have
\begin{align}
||(\mathbf{S} + \mathbf{u}_c) - (\mathbf{S} + \mathbf{u_{c'}})||^2_2 = || \mathbf{u}_{c'} - \mathbf{u}_c||_2^2 \leq \xi.
\end{align}
According to Eq.~\ref{eq:ut} and \ref{eq:utf}, $\mathbf{g}(\cdot)$ is \textit{smooth}, \ie, $\mathbf{g} \in C^{\infty}$ (this conclusion is actually non-trivial, but its proof is beyond the scope of this work). Based on the smoothness of $\mathbf{g}$, we have
\begin{align}
 \forall \xi, \mathbf{S}, \mathbf{u}_c, \mathbf{u}_{c'}, & ||\mathbf{g}(\mathbf{S}+\mathbf{u}_c)-\mathbf{g}(\mathbf{S}+\mathbf{u}_{c'})||_2^2  \nonumber\\
\leq & ||\nabla \mathbf{g}||_2^2 || \mathbf{u}_{c'} - \mathbf{u}_c||_2^2,
\label{eq:1a}
\end{align}
where
\begin{align}
||\nabla \mathbf{g}||_2^2 & = \text{sup}_{\mathbf{u}\in \Omega} ||\nabla \mathbf{g}(\mathbf{u})||_2^2, \nonumber\\
\Omega & = \{\mathbf{S} + \lambda \mathbf{u}_c + (1-\lambda) \mathbf{u}_{c'}\}_{\lambda\in[0,1]}.
\label{eq:1b}
\end{align}
Therefore, we have
\begin{align}
\forall \xi, S, c, & c'\in \mathcal{N}_\xi(c), \exists \eta < \infty, \text{\it s.t.} \nonumber\\
&|\mathbf{f}(S, c)-\mathbf{f}(S,c')| \leq \eta\xi\qedhere.
\label{eq:1c}
\end{align}
\end{proof}

Now we prove the second property, \ie, $\forall c'\in\mathcal{N}_\xi(c), \\|| \nabla \mathbf{f}(S,c) - \nabla \mathbf{f}(S,c') ||^2_2 \leq  \zeta\xi$. 
\begin{proof}
Similarly as in Eq.~\ref{eq:1a}, based on the smoothness of $\mathbf{g}$ (therefore, the smoothness of $\nabla \mathbf{g}$), we have
\begin{align}
 \forall \xi, \mathbf{S}, \mathbf{u}_c, \mathbf{u}_{c'}, & ||\nabla \mathbf{g}(\mathbf{S}+\mathbf{u}_c)-\nabla \mathbf{g}(\mathbf{S}+\mathbf{u}_{c'})||^2  \nonumber\\
\leq & ||\mathbf{H}||_2^2 || \mathbf{u}_{c'} - \mathbf{u}_c||_2^2,
\end{align}
where $\mathbf{H}$ is the Hessian matrix of $\mathbf{g}$, and
\begin{align}
||\mathbf{H}||_2^2 & = \text{sup}_{\mathbf{u}\in \Omega} ||\mathbf{H}(\mathbf{u})||_2^2, \nonumber\\
\Omega & = \{\mathbf{S} + \lambda \mathbf{u}_c + (1-\lambda) \mathbf{u}_{c'}\}_{\lambda\in[0,1]}.
\end{align}
Therefore, similarly to $\mathbf{f}$, we have
\begin{align}
\forall \xi, S, c, & c'\in \mathcal{N}_\xi(c), \exists \zeta < \infty, \text{\it s.t.} \nonumber\\
&|| \nabla \mathbf{f}(S,c) - \nabla \mathbf{f}(S,c') ||^2_2 \leq  \zeta\xi\qedhere.
\end{align}
\end{proof}

\subsection*{APPENDIX B: Details of Cell Embedding}
To enable efficient Gaussian exploration in a continuous space to fully leverage the data cube structure, we propose and design the novel process of \textit{cell embedding} upon existing data cube technology. Particularly, we aim to compute a distributional representation for each cell, which captures their semantics \wrt~the corresponding multi-dimensional labels. As we show in Section \ref{sec:rl}, since such continuous cell representations can facilitate the utility estimation of close-by cells efficiently by avoiding the need of exhaustive search, they are critical to the success of an efficient reinforcement learning algorithm for efficient cube exploration and network construction.
%
%

We introduce a straightforward way of computing the cell embedding, based on the recent success of word embedding techniques, and theoretically show the effectiveness of it in capturing the semantics of cells. Following is a summarization of our cell embedding approach:
\begin{itemize}[leftmargin=20pt]
\item \textbf{Step 1}: For each cell $c\in\mathcal{C}$, we firstly decompose it into the corresponding set of labels in the multiple cube dimensions, \ie, $c = \{l_c^1, l_c^2, \ldots, l_c^P\}$, where each $l_c^p\in \mathcal{L}^p$ is the label of $c$ in the $p$-th dimension. The embedding of $c$ is then computed as $\mathbf{u}_c = [\mathbf{u}_{l_c^1}, \mathbf{u}_{l_c^2}, \ldots, \mathbf{u}_{l_c^P}]$, where $\mathbf{u}_{l_c^p}$ is the label embedding of label $l_c^p$ and $[\ldots]$ is the vector concatenation operator.
\item \textbf{Step 2}: For each label $l\in\mathcal{L}$ ($\mathcal{L}=\cup_{p=1}^P\mathcal{L}^p$), we firstly split it into single words, \ie, $l=\{w_l^1, w_l^2, \ldots\}$. The embedding of $l$ is then computed as $\mathbf{u}_l=\sum_i \mathbf{u}_{w_l^i}$.
\item \textbf{Step 3}: For each word $w\in\mathcal{W}$, where $\mathcal{W}$ is an assumably complete vocabulary, look up the embedding of $w$, \ie, $\mathbf{u}_w$, from a pre-trained word embedding table. For generality, we use the popular GloVe table\footnote{https://nlp.stanford.edu/projects/glove/}. Additionally, we assign the zero vector to all stop words and unmatched words.
\end{itemize}

Let us still take DBLP data as an example. Consider a cell $c$, \eg, $c=<$\textsf{200X, SIGMOD, data mining}$>$, which has three dimensions, \ie, \textit{decade}, \textit{venue} and \textit{topic}. 
For the \textit{decade} dimension, the label \textsf{200X} is actually an aggregation of 10 words (2000-2009). Therefore, we can directly look up the 10 words from the embedding table and add them up as the label representation of \textsf{200X}. 
For the texual dimensions like \textit{topic}, we firstly split the phrase \textsf{data mining} into two words, \ie, \textsf{data} and \textsf{mining}, then look up their embeddings in the table, and finally add them up. The \textit{venue} label \textsf{SIGMOD} is firstly mapped back to its full name \textsf{Sig on management of data} and then processed in the same way as \textit{topic}. After getting the embeddings of all three labels, the cell embedding is computed as a $\kappa$-dimensional vector concatenation of the corresponding label embeddings.

Although this approach of generating cell embedding is straightforward, we theoretically show that in this way, we can actually capture the semantic proximities among cells. Specifically, we prove the following theorem.

\begin{theorem}
Given our particular cell embedding approach, $\forall c_i, c_j \in \mathcal{C}, ||\mathbf{u}_{c_i} - \mathbf{u}_{c_j}||_2^2 \leq \epsilon S(c_i, c_j)$, where $\epsilon$ is a constant and $S(c_i, c_j)$ is the semantic gap between $c_i$, $c_j$.
\label{th:cube}
\end{theorem}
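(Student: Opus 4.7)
The plan is to exploit the compositional structure of the cell embedding at three nested levels: a cell is a concatenation of per-dimension label embeddings, each label embedding is a sum of its word embeddings, and each word embedding is a fixed lookup from a pre-trained table like GloVe. Because the squared $\ell_2$ norm is additive under concatenation, I would first reduce the cell-level bound to $P$ independent per-dimension bounds at the label level by writing
\[
\|\mathbf{u}_{c_i} - \mathbf{u}_{c_j}\|_2^2 \;=\; \sum_{p=1}^{P} \bigl\|\mathbf{u}_{l_{c_i}^p} - \mathbf{u}_{l_{c_j}^p}\bigr\|_2^2.
\]

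Second, I would expand each label embedding as the sum of its constituent word embeddings and bound the per-dimension difference by comparing word vectors position by position. After zero-padding the shorter label (which is natural, since the construction already assigns the zero vector to stop and unmatched words) and applying the triangle inequality followed by Cauchy--Schwarz, the quantity $\|\mathbf{u}_l - \mathbf{u}_{l'}\|_2^2$ can be bounded by a constant times the sum of squared per-word embedding gaps $\sum_k \|\mathbf{u}_{w^k_l} - \mathbf{u}_{w^k_{l'}}\|_2^2$, where the constant depends only on the maximum label length.

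Third, I would invoke the defining property of the pre-trained word embedding table: semantically close words have small $\ell_2$ distance, \ie, $\|\mathbf{u}_w - \mathbf{u}_{w'}\|_2^2 \le \epsilon_0\, s(w,w')$ for a base word-level semantic gap $s(\cdot,\cdot)$. Summing this bound over aligned word positions inside each label, then over all $P$ cube dimensions, the constants collapse into a single $\epsilon$ and the accumulated word-level gaps collapse into $S(c_i, c_j)$, yielding
\[
\|\mathbf{u}_{c_i} - \mathbf{u}_{c_j}\|_2^2 \;\le\; \epsilon\, S(c_i, c_j).
\]

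The main obstacle is conceptual rather than computational: the statement is only as rigorous as the notion of ``semantic gap.'' I plan to handle this by defining $S(c_i, c_j)$ constructively as the aggregate word-level semantic distance across aligned positions in all dimensions, and by treating the Lipschitz-type behavior of GloVe as a modeling axiom (justified by the fact that GloVe is trained on co-occurrence statistics so that $\ell_2$ distances reflect distributional semantics). Under this formalization the theorem reduces to the chain of elementary inequalities above, with $\epsilon$ depending only on $P$, the maximum number of words per label, and the embedding table's semantic Lipschitz constant; the hardest honest step is arguing that the optimal word alignment used in the label-level bound does not blow up the constant, which I would handle by either fixing a canonical alignment or passing through an $L^1$-optimal-transport relaxation.
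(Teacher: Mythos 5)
Your proposal is correct and reaches the theorem, but it diverges from the paper's proof at the label level, which is the only nontrivial step. The outer and inner layers are identical: both arguments use the additivity of the squared $\ell_2$ norm under concatenation to split the cell-level bound into $P$ per-dimension label bounds, and both treat the word-level inequality $\|\mathbf{u}_{w_i}-\mathbf{u}_{w_j}\|_2^2\le\epsilon\,S(w_i,w_j)$ as an axiom about the pre-trained table. Where you differ is in how the label embedding $\mathbf{u}_l=\sum_k\mathbf{u}_{w_l^k}$ is related to word-level gaps. You zero-pad, align words position by position, and apply Cauchy--Schwarz to get $\|\mathbf{u}_{l_i}-\mathbf{u}_{l_j}\|_2^2\le K_{\max}\sum_k\|\mathbf{u}_{w_i^k}-\mathbf{u}_{w_j^k}\|_2^2$, which forces your constant to absorb the maximum label length and, as you honestly note, makes the bound depend on the choice of alignment. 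The paper instead \emph{defines} the label-level gap as the full bipartite sum $S(l_i,l_j)=\sum_{k_i}\sum_{k_j}S(w_i^{k_i},w_j^{k_j})$ and proves by direct expansion of inner products that $\|\mathbf{u}_{l_i}-\mathbf{u}_{l_j}\|_2^2\le\sum_{k_i}\sum_{k_j}\|\mathbf{u}_{w_i^{k_i}}-\mathbf{u}_{w_j^{k_j}}\|_2^2$, so the same $\epsilon$ survives all the way up with no length-dependent factor and no alignment is ever needed. The trade-off: the paper's all-pairs definition is canonical and alignment-free but has the quirk that a multi-word label has nonzero gap to itself, while your aligned definition is zero on the diagonal but order-sensitive; and your route pays a factor of $K_{\max}$ that the paper's exact algebraic identity avoids. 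Since the theorem only asserts existence of \emph{some} constant $\epsilon$ against \emph{some} notion of semantic gap, both formalizations are legitimate, and your proof closes once you fix a canonical alignment (or the transport relaxation you mention); there is no genuine gap, just a different and slightly lossier decomposition.
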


\begin{proof}
This proof is based on the assumption that the pre-trained word embedding $\mathcal{U}_W$ has captured word semantics in the vocabulary $\mathcal{W}$, so that $\forall w_i, w_j \in \mathcal{W}, ||\mathbf{u}_{w_i}-\mathbf{u}_{w_j}||^2_2<\epsilon S(w_i, w_j)$, where $S(w_i, w_j)$ is the semantic gap between $w_i$ and $w_j$. 

We further define the semantic gaps among labels. 
Specifically, $\forall l_i, l_j \in \mathcal{L}^p$, where $p\in\mathcal{P}$ is any cube dimension, suppose $l_i = \{w_i^1, w_i^2, \ldots, w_i^{K_i}\}$, where $K_i$ is the number of words in $l_i$; $l_j = \{w_j^1, w_j^2, \ldots, w_j^{K_j}\}$, where $K_j$ is the number of words in $l_j$, the semantic gap $S(l_i, l_j)$ can then be defined as $S(l_i, l_j) = \sum_{k_i=1}^{K_i}\sum_{k_j=1}^{K_j} S(w_i^{k_i}, w_j^{k_j})$.

Now we first prove that the label embedding $\mathcal{U}_L$ captures the label semantics, so that $\forall l_i, l_j \in \mathcal{L}^p, ||\mathbf{u}_{l_i} - \mathbf{u}_{l_j}||_2^2 \leq \epsilon S(l_i, l_j)$. 
For the label embeddings $\mathcal{U}_L$, ($\forall \mathbf{u}_l \in \mathcal{U}_L, \mathbf{u}_l=\sum_{k=1}^{K} \mathbf{u}_{w_l^k}$),
\begin{align}
&\sum_{k_i=1}^{K_i} \sum_{k_j=1}^{K_j} || \mathbf{u}_{w_i^{k_i}} - \mathbf{u}_{w_j^{k_j}} ||^2_2 - ||\mathbf{u}_{l_i}-\mathbf{u}_{l_j}||^2_2 \nonumber\\
= & (K_j - 1) \sum_{k_i=1}^{K_i} \mathbf{u}_{w_i^{k_i}}^T  \mathbf{u}_{w_i^{k_i}} + (K_i - 1) \sum_{k_j=1}^{K_j} \mathbf{u}_{w_j^{k_j}}^T  \mathbf{u}_{w_j^{k_j}} \nonumber\\
&- 2 \sum_{k_i^1 \neq k_i^2; k_i^1, k_i^2=1}^{K_i} \mathbf{u}_{w_i^{k_i^1}}^T \mathbf{u}_{w_i^{k_i^2}} - 2 \sum_{k_j^1 \neq k_j^2; k_j^1, k_j^2=1}^{K_j} \mathbf{u}_{w_j^{k_j^1}}^T \mathbf{u}_{w_j^{k_j^2}} \nonumber\\
= & (K_j - K_i)(\sum_{k_i=1}^{K_i} \mathbf{u}_{w_i^{k_i}}^T  \mathbf{u}_{w_i^{k_i}} - \sum_{k_j=1}^{K_j} \mathbf{u}_{w_j^{k_j}}^T  \mathbf{u}_{w_j^{k_j}}) \nonumber\\
& + \sum_{k_i^1 \neq k_i^2; k_i^1, k_i^2=1}^{K_i} || \mathbf{u}_{w_i^{k_i^1}} - \mathbf{u}_{w_i^{k_i^2}} ||^2_2 \nonumber\\
& + \sum_{k_j^1 \neq k_j^2; k_j^1, k_j^2=1}^{K_j} || \mathbf{u}_{w_j^{k_j^1}} - \mathbf{u}_{w_j^{k_j^2}} ||^2_2 
\geq  0.
\end{align}
Therefore, we have
\begin{align}
&||\mathbf{u}_{l_i}-\mathbf{u}_{l_j}||^2_2 \leq \sum_{k_i=1}^{K_i} \sum_{k_j=1}^{K_j} || \mathbf{u}_{w_i^{k_i^1}} - \mathbf{u}_{w_j^{k_j^1}} ||^2_2 \nonumber\\
\leq & \epsilon \sum_{k_i=1}^{K_i}\sum_{k_j=1}^{K_j} S(w_i^{k_i}, w_j^{k_j}) = \epsilon S(l_i, l_j) 
\end{align}

Moreover, $\forall c_i, c_j \in \mathcal{C}$, suppose $c_i = \{l_i^1, l_i^2, \ldots, l_i^{P}\}$, and $c_j = \{l_j^1, l_j^2, \ldots, l_j^{P}\}$, where $P$ is the number of cube dimensions, we then define the semantic gap $S(c_i, c_j)$ as
\begin{align}
S(c_i, c_j) = \sum_{p=1}^{P} S(l_i^p, l_j^p).
\end{align} 

Now we prove that the cell embedding $\mathcal{U}_C$ captures the cell semantics, so that $\forall c_i, c_j \in C, ||\mathbf{u}_{c_i} - \mathbf{u}_{c_j}||_2^2 \leq \epsilon S(c_i, c_j)$.

For the cell embeddings $\mathcal{U}_C$, ($\forall \mathbf{u}_{c} \in \mathcal{U}_C, \mathbf{u}_{c}=[\mathbf{u}_{l_{c}^1}, \ldots, \mathbf{u}_{l_{c}^P}]$, where $[\ldots]$ denotes vector concatenation), the proof is trivial because the norm of a concatenated vector equals to the sum of the norms of its components, \ie,
\begin{align}
& ||\mathbf{u}_{c_i} - \mathbf{u}_{c_j}||^2_2 \nonumber\\
= & \sum_{p=1}^P ||\mathbf{u}_{l_i^{p}} - \mathbf{u}_{l_j^{p}}||^2_2 \nonumber\\
\leq & \epsilon \sum_{p=1}^P S(l_i^p, l_j^p) = \epsilon S(c_i, c_j)
\qedhere.
\end{align}
\end{proof}

\bibliographystyle{IEEEtran}
\small
\bibliography{carlyang} 

\begin{thebibliography}{10}
\providecommand{\url}[1]{#1}
\csname url@samestyle\endcsname
\providecommand{\newblock}{\relax}
\providecommand{\bibinfo}[2]{#2}
\providecommand{\BIBentrySTDinterwordspacing}{\spaceskip=0pt\relax}
\providecommand{\BIBentryALTinterwordstretchfactor}{4}
\providecommand{\BIBentryALTinterwordspacing}{\spaceskip=\fontdimen2\font plus
\BIBentryALTinterwordstretchfactor\fontdimen3\font minus
  \fontdimen4\font\relax}
\providecommand{\BIBforeignlanguage}[2]{{%
\expandafter\ifx\csname l@#1\endcsname\relax
\typeout{** WARNING: IEEEtran.bst: No hyphenation pattern has been}%
\typeout{** loaded for the language `#1'. Using the pattern for}%
\typeout{** the default language instead.}%
\else
\language=\csname l@#1\endcsname
\fi
#2}}
\providecommand{\BIBdecl}{\relax}
\BIBdecl

\bibitem{yang2019neural}
C.~Yang, J.~Zhang, and J.~Han, ``Neural embedding propagation on heterogeneous
  networks,'' in \emph{ICDM}, 2019.

\bibitem{yang2017cone}
C.~Yang, H.~Lu, and K.~C. Chang, ``Cone: Community oriented network
  embedding,'' in \emph{arXiv preprint arXiv:1709.01554}, 2017.

\bibitem{yang2017bi}
C.~Yang, L.~Zhong, L.-J. Li, and L.~Jie, ``Bi-directional joint inference for
  user links and attributes on large social graphs,'' in \emph{WWW}, 2017, pp.
  564--573.

\bibitem{yang2017bridging}
C.~Yang, L.~Bai, C.~Zhang, Q.~Yuan, and J.~Han, ``Bridging collaborative
  filtering and semi-supervised learning: A neural approach for poi
  recommendation,'' in \emph{KDD}, 2017, pp. 1245--1254.

\bibitem{zhang2017weisfeiler}
M.~Zhang and Y.~Chen, ``Weisfeiler-lehman neural machine for link prediction,''
  in \emph{KDD}, 2017, pp. 575--583.

\bibitem{lyu2017enhancing}
T.~Lyu, Y.~Zhang, and Y.~Zhang, ``Enhancing the network embedding quality with
  structural similarity,'' in \emph{CIKM}.

\bibitem{ribeiro2017struc2vec}
L.~F. Ribeiro, P.~H. Saverese, and D.~R. Figueiredo, ``struc2vec: Learning node
  representations from structural identity,'' in \emph{KDD}, 2017.

\bibitem{han2011data}
J.~Han, J.~Pei, and M.~Kamber, \emph{Data mining: concepts and
  techniques}.\hskip 1em plus 0.5em minus 0.4em\relax Elsevier, 2011.

\bibitem{lin2008text}
C.~X. Lin, B.~Ding, J.~Han, F.~Zhu, and B.~Zhao, ``Text cube: Computing ir
  measures for multidimensional text database analysis,'' in \emph{ICDM}, 2008.

\bibitem{yang2019cubenet}
C.~Yang, D.~Teng, S.~Liu, S.~Basu, J.~Zhang, J.~Shen, C.~Zhang, J.~Shang,
  L.~Kaplan, T.~Harratty, and J.~Han, ``Cubenet: Multi-facet hierarchical
  heterogeneous network construction, analysis, and mining,'' in \emph{KDD},
  2019.

\bibitem{zhang2018taxogen}
C.~Zhang, F.~Tao, X.~Chen, J.~Shen, M.~Jiang, B.~Sadler, M.~Vanni, and J.~Han,
  ``Taxogen: Unsupervised topic taxonomy construction by adaptive term
  embedding and clustering,'' in \emph{KDD}, 2018, pp. 2701--2709.

\bibitem{tao2018doc2cube}
F.~Tao, C.~Zhang, X.~Chen, M.~Jiang, T.~Hanratty, L.~Kaplan, and J.~Han,
  ``Doc2cube: Allocating documents to text cube without labeled data,'' in
  \emph{ICDM}, 2018.

\bibitem{tao2016multi}
F.~Tao, H.~Zhuang, C.~W. Yu, Q.~Wang, T.~Cassidy, L.~R. Kaplan, C.~R. Voss, and
  J.~Han, ``Multi-dimensional, phrase-based summarization in text cubes.''
  \emph{IDEB}, vol.~39, no.~3, pp. 74--84, 2016.

\bibitem{shang2017automated}
J.~Shang, J.~Liu, M.~Jiang, X.~Ren, C.~R. Voss, and J.~Han, ``Automated phrase
  mining from massive text corpora,'' \emph{TKDE}, 2018.

\bibitem{gionis2017bump}
A.~Gionis, M.~Mathioudakis, and A.~Ukkonen, ``Bump hunting in the dark: Local
  discrepancy maximization on graphs,'' \emph{TKDE}, pp. 529--542, 2017.

\bibitem{sozio2010community}
M.~Sozio and A.~Gionis, ``The community-search problem and how to plan a
  successful cocktail party,'' in \emph{KDD}.\hskip 1em plus 0.5em minus
  0.4em\relax ACM, 2010, pp. 939--948.

\bibitem{silver2017mastering}
D.~Silver, J.~Schrittwieser, K.~Simonyan, I.~Antonoglou, A.~Huang, A.~Guez,
  T.~Hubert, L.~Baker, M.~Lai, A.~Bolton \emph{et~al.}, ``Mastering the game of
  go without human knowledge,'' \emph{Nature}, vol. 550, no. 7676, p. 354,
  2017.

\bibitem{sutton2000policy}
R.~S. Sutton, D.~A. McAllester, S.~P. Singh, and Y.~Mansour, ``Policy gradient
  methods for reinforcement learning with function approximation,'' in
  \emph{NIPS}, 2000, pp. 1057--1063.

\bibitem{bello2016neural}
I.~Bello, H.~Pham, Q.~V. Le, M.~Norouzi, and S.~Bengio, ``Neural combinatorial
  optimization with reinforcement learning,'' in \emph{ICLR}, 2017.

\bibitem{khalil2017learning}
E.~Khalil, H.~Dai, Y.~Zhang, B.~Dilkina, and L.~Song, ``Learning combinatorial
  optimization algorithms over graphs,'' in \emph{NIPS}, 2017, pp. 6351--6361.

\bibitem{vinyals2015order}
O.~Vinyals, S.~Bengio, and M.~Kudlur, ``Order matters: Sequence to sequence for
  sets,'' in \emph{ICLR}, 2015.

\bibitem{lillicrap2015continuous}
T.~P. Lillicrap, J.~J. Hunt, A.~Pritzel, N.~Heess, T.~Erez, Y.~Tassa,
  D.~Silver, and D.~Wierstra, ``Continuous control with deep reinforcement
  learning,'' \emph{arXiv preprint arXiv:1509.02971}, 2015.

\bibitem{schulman2017proximal}
J.~Schulman, F.~Wolski, P.~Dhariwal, A.~Radford, and O.~Klimov, ``Proximal
  policy optimization algorithms,'' \emph{arXiv preprint arXiv:1707.06347},
  2017.

\bibitem{kingma2014adam}
D.~P. Kingma and J.~Ba, ``Adam: A method for stochastic optimization,'' in
  \emph{ICLR}, 2015.

\bibitem{sun2011pathsim}
Y.~Sun, J.~Han, X.~Yan, P.~S. Yu, and T.~Wu, ``Pathsim: Meta path-based top-k
  similarity search in heterogeneous information networks,'' \emph{VLDB},
  vol.~4, no.~11, pp. 992--1003, 2011.

\bibitem{perozzi2014deepwalk}
B.~Perozzi, R.~Al-Rfou, and S.~Skiena, ``Deepwalk: Online learning of social
  representations,'' in \emph{KDD}.\hskip 1em plus 0.5em minus 0.4em\relax ACM,
  2014, pp. 701--710.

\bibitem{tang2015line}
J.~Tang, M.~Qu, M.~Wang, M.~Zhang, J.~Yan, and Q.~Mei, ``Line: Large-scale
  information network embedding,'' in \emph{WWW}, 2015, pp. 1067--1077.

\bibitem{grover2016node2vec}
A.~Grover and J.~Leskovec, ``node2vec: Scalable feature learning for
  networks,'' in \emph{KDD}, 2016, pp. 855--864.

\bibitem{tang2008arnetminer}
J.~Tang, J.~Zhang, L.~Yao, J.~Li, L.~Zhang, and Z.~Su, ``Arnetminer: extraction
  and mining of academic social networks,'' in \emph{KDD}, 2008.

\bibitem{liu2015community}
L.~Liu, L.~Xu, Z.~Wangy, and E.~Chen, ``Community detection based on structure
  and content,'' in \emph{ICDM}, 2015, pp. 271--280.

\bibitem{mazumder2010spectral}
R.~Mazumder, T.~Hastie, and R.~Tibshirani, ``Spectral regularization algorithms
  for learning large incomplete matrices,'' \emph{JMLR}, vol.~11, no. Aug, pp.
  2287--2322, 2010.

\bibitem{liu2017experimental}
Y.~Liu, T.-A.~N. Pham, G.~Cong, and Q.~Yuan, ``An experimental evaluation of
  point-of-interest recommendation in location-based social networks,''
  \emph{VLDB}, vol.~10, no.~10, pp. 1010--1021, 2017.

\bibitem{yang2019relationship}
C.~Yang and K.~Chang, ``Relationship profiling over social networks: Reverse
  smoothness from similarity to closeness,'' in \emph{SDM}, 2019.

\bibitem{yang2020relation}
C.~Yang, J.~Zhang, H.~Wang, S.~Li, M.~Kim, M.~Walker, Y.~Xiao, and J.~Han,
  ``Relation learning on social networks with multi-modal graph edge
  variational autoencoders,'' in \emph{WSDM}, 2020.

\bibitem{yang2019query}
C.~Yang, L.~Gan, Z.~Wang, J.~Shen, J.~Xiao, and J.~Han, ``Query-specific
  knowledge summarization with entity evolutionary networks,'' in \emph{CIKM},
  2019.

\bibitem{yang2019conditional}
C.~Yang, P.~Zhuang, W.~Shi, A.~Luu, and P.~Li, ``Conditional structure
  generation through graph variational generative adversarial nets,'' in
  \emph{NeurIPS}, 2019.

\bibitem{mikolov2013distributed}
T.~Mikolov, I.~Sutskever, K.~Chen, G.~S. Corrado, and J.~Dean, ``Distributed
  representations of words and phrases and their compositionality,'' in
  \emph{NIPS}, 2013, pp. 3111--3119.

\bibitem{yang2018node}
C.~Yang, M.~Liu, V.~W. Zheng, and J.~Han, ``Node, motif and subgraph:
  Leveraging network functional blocks through structural convolution,'' in
  \emph{ASONAM}, 2018, pp. 47--52.

\bibitem{shi2019user}
Y.~Shi, X.~He, N.~Zhang, C.~Yang, and J.~Han, ``User-guided clustering in
  heterogeneous information networks via motif-based comprehensive
  transcription.''

\bibitem{yang2019place}
C.~Yang, D.~H. Hoang, T.~Mikolov, and J.~Han, ``Place deduplication with
  embeddings,'' in \emph{WWW}, 2019, pp. 3420--3426.

\bibitem{yang2018similarity}
C.~Yang, M.~Liu, F.~He, X.~Zhang, J.~Peng, and J.~Han, ``Similarity modeling on
  heterogeneous networks via automatic path discovery,'' in \emph{ECML-PKDD},
  2018, pp. 37--54.

\bibitem{yang2018did}
C.~Yang, C.~Zhang, J.~Han, X.~Chen, and J.~Ye, ``Did you enjoy the ride:
  Understanding passenger experience via heterogeneous network embedding,'' in
  \emph{Proc. of 2018, IEEE International Conference on Data Engineering}.

\bibitem{yang2018meta}
C.~Yang, Y.~Feng, P.~Li, Y.~Shi, and J.~Han, ``Meta-graph based hin spectral
  embedding: Methods, analyses, and insights,'' in \emph{ICDM}, 2018, pp.
  657--666.

\bibitem{gomez2010inferring}
M.~Gomez~Rodriguez, J.~Leskovec, and A.~Krause, ``Inferring networks of
  diffusion and influence,'' in \emph{KDD}, 2010, pp. 1019--1028.

\bibitem{mcauley2015inferring}
J.~McAuley, R.~Pandey, and J.~Leskovec, ``Inferring networks of substitutable
  and complementary products,'' in \emph{KDD}, 2015.

\bibitem{hallac2017network}
D.~Hallac, Y.~Park, S.~Boyd, and J.~Leskovec, ``Network inference via the
  time-varying graphical lasso,'' in \emph{KDD}.\hskip 1em plus 0.5em minus
  0.4em\relax ACM, 2017, pp. 205--213.

\bibitem{zhou2018dynamic}
L.~Zhou, Y.~Yang, X.~Ren, F.~Wu, and Y.~Zhuang, ``Dynamic network embedding by
  modeling triadic closure process,'' in \emph{AAAI}, 2018.

\bibitem{li2017attributed}
J.~Li, H.~Dani, X.~Hu, J.~Tang, Y.~Chang, and H.~Liu, ``Attributed network
  embedding for learning in a dynamic environment,'' in \emph{CIKM}, 2017, pp.
  387--396.

\bibitem{ma2018depthlgp}
J.~Ma, P.~Cui, and W.~Zhu, ``Depthlgp: Learning embeddings of out-of-sample
  nodes in dynamic networks.''\hskip 1em plus 0.5em minus 0.4em\relax AAAI,
  2018.

\bibitem{vukotic2014neo4j}
A.~Vukotic, N.~Watt, T.~Abedrabbo, D.~Fox, and J.~Partner, \emph{Neo4j in
  action}.\hskip 1em plus 0.5em minus 0.4em\relax Manning Publications Co.,
  2014.

\bibitem{krizhevsky2012imagenet}
A.~Krizhevsky, I.~Sutskever, and G.~E. Hinton, ``Imagenet classification with
  deep convolutional neural networks,'' in \emph{NIPS}, 2012, pp. 1097--1105.

\bibitem{dai2016discriminative}
H.~Dai, B.~Dai, and L.~Song, ``Discriminative embeddings of latent variable
  models for structured data,'' in \emph{ICML}, 2016, pp. 2702--2711.

\end{thebibliography}
\end{document}